\numberwithin{equation}{section}
\newtheorem{teor}{Theorem}[section]
\newtheorem{lem}[teor]{Lemma}
\newtheorem{RHP}[teor]{Riemann--Hilbert Problem}
{ \theoremstyle{definition}
\newtheorem{rem}[teor]{Remark} }
\newcommand{\e}{\mathrm{e}}
\renewcommand{\i}{\mathrm{i}}
\renewcommand{\d}{\mathrm{d}}
\renewcommand{\Im}{\operatorname{Im}}
\renewcommand{\(}{\left(}
\renewcommand{\)}{\right)}
\newcommand{\ol}{\overline}
\renewcommand{\Re}{\operatorname{Re}}
\begin{document}
\allowdisplaybreaks

\newcommand{\arXivNumber}{1805.05153}

\renewcommand{\PaperNumber}{119}

\FirstPageHeading

\ShortArticleName{Stimulated Raman Scattering: Solvability of Whitham System of Equations}

\ArticleName{Initial-Boundary Value Problem\\ for Stimulated Raman Scattering Model:\\ Solvability of Whitham Type System of Equations\\ Arising in Long-Time Asymptotic Analysis}

\Author{Rustem R.~AYDAGULOV~$^{\dag^1\dag^2}$ and Alexander A.~MINAKOV~$^{\dag^3\dag^4}$}

\AuthorNameForHeading{R.R.~Aydagulov and A.A.~Minakov}

\Address{$^{\dag^1}$~Lomonosov State University, Leninskie Gory 1, Moscow, Russia}
\Address{$^{\dag^2}$~A.A.~Blagonravov Institute of Mechanical Engineering, Russian Academy of Sciences,\\
\hphantom{$^{\dag^2}$}~Bardina 4, Moscow, Russia}
\EmailDD{\href{mailto:a_rust@bk.ru}{a\_rust@bk.ru}}

\Address{$^{\dag^3}$~International School for Advanced Studies (SISSA), via Bonomea 265, Trieste, Italy}
\EmailDD{\href{mailto:ominakov@sissa.it}{ominakov@sissa.it}}
\URLaddressDD{\url{https://people.sissa.it/~ominakov/}}

\Address{$^{\dag^4}$~Institut de Recherche en Math\'{e}matique et Physique (IRMP),\\
\hphantom{$^{\dag^4}$}~Universit\'{e} catholique de Louvain (UCL), Chemin du Cyclotron 2, Louvain-La-Neuve, Belgium}

\ArticleDates{Received May 15, 2018, in final form October 24, 2018; Published online November 07, 2018}

\Abstract{An initial-boundary value problem for a model of stimulated Raman scattering was considered in [Moskovchenko E.A., Kotlyarov V.P., \textit{J.~Phys.~A: Math. Theor.} \textbf{43} (2010), 055205, 31~pages]. The authors showed that in the long-time range $t\to+\infty$ the $x>0$, $t>0$ quarter plane is divided into 3 regions with qualitatively different asymptotic behavior of the solution: a region of a finite amplitude plane wave, a modulated elliptic wave region and a vanishing dispersive wave region. The asymptotics in the modulated elliptic region was studied under an implicit assumption of the solvability of the corresponding Whitham type equations. Here we establish the existence of these parameters, and thus justify the results by Moskovchenko and Kotlyarov.}

\Keywords{stimulated Raman scattering; Riemann--Hilbert problem; Whitham modulation theory; integrable systems}

\Classification{37K15; 35Q51; 37K40}

\section{Introduction and preliminary}

In \cite{FM,Mo09,MoK06, MoK09,MoK10} the following problem was studied: consider the initial boundary value (IBV) problem for the stimulated Raman scattering model
\begin{gather}
 2\i q_t(x,t)=\mu(x,t),\qquad \mu_x(x,t)=2\i\nu(x,t)q(x,t),\nonumber\\
 \nu_x(x,t)=\i\big(\ol{q(x,t)}\mu(x,t)-\ol{\mu(x,t)}q(x,t)\big),\label{SRS}
\end{gather}
with the initial data
\begin{gather}\label{ic}
 q(x,0)=0,\qquad x\in(0,+\infty),
\end{gather}
and periodic boundary conditions
\begin{gather}\label{bc}
 \mu(0,t)=p\e^{\i\omega t},\qquad \nu(0,t)=l,
\end{gather}
where $l$, $p$, $\omega$ are real constants, which satisfy
\begin{gather*}p^2+l^2=1,\qquad -1<l<0,\qquad p>0.\end{gather*}

A reader who is interested in the physical meaning of this IBV problem can refer to~\cite{FM}. In~\cite{MoK06} the Riemann--Hilbert problem formalism for the IBV problem \eqref{SRS}--\eqref{bc} (in more general setting) was formulated, and in \cite{MoK09, MoK10} the long-time asymptotic analysis was done.

They obtained the following results, which we summarize in Theorems \ref{Theor1}, \ref{Theor2_short}, \ref{Theor3} (full formulation of their result in the region $\omega_0^2 t<x<\omega^2 t$ is too involved, and we give here a shortened version (Theorem~\ref{Theor2_short}). An interested reader can find a full formulation in the appendix (Theorem~\ref{Theor2_full})):

\begin{teor}[Kotlyarov--Mosckovchenko \cite{MoK10}, a plane wave of finite amplitude]\label{Theor1}
Let $\omega_0>0$ be defined by the formula
\begin{gather}\label{omega0}
\omega_0^2=\frac{-8l^3\omega^2}{27-18l^2-l^4+(9-l^2)\sqrt{(1-l^2)(9-l^2)}}.
\end{gather}
Then in the region $0<x<\omega_0^2 t$ the solution of the IBV problem \eqref{SRS}--\eqref{bc} for $t\to\infty$ takes the form of a plane wave
\begin{gather*} q(x,t)= \frac{-p}{2\omega}\exp\left[\i\omega t-\i\frac{l}{\omega}x-2\i\phi(\xi)\right]+\mathcal{O}\big(t^{-1/2}\big),\\
\mu(x,t)= p\exp\left[\i\omega t-\i\frac{l}{\omega}x-2\i\phi(\xi)\right]+\mathcal{O}\big(t^{-1/2}\big),\\
\nu(x,t)=1+\mathcal{O}\big(t^{-1/2}\big).
 \end{gather*}
Here \begin{gather*}\phi(\xi)=\frac{1}{2\pi}\(\int_{-\infty}^{\lambda_-(\xi)}+\int_{\lambda_+(\xi)}^{+\infty}\)\frac{\log A^2(k) \d k}{X(k)},\end{gather*}
where $\lambda_{\pm}(\xi)$ are defined in \eqref{system_dg_semisimple}, and $A(k)$, $X(k)$ are defined in \eqref{rho_def}, \eqref{X}.
\end{teor}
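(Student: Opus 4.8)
\noindent
The result is obtained by the Deift--Zhou nonlinear steepest descent analysis of the matrix Riemann--Hilbert problem (RHP) associated with the IBV problem in \cite{MoK06}, and I would organize the proof as follows.

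\emph{Set-up and signature table.} First I would recall the $2\times2$ RHP whose solution $M(k)=M(k;x,t)$ is normalized by $M(k)\to I$ as $k\to\infty$, whose jump across its contour carries the oscillatory factors $\e^{\pm2\i t\theta(k,\xi)}$ with $\xi:=x/t$ held fixed, and from whose large-$k$ expansion $M(k)=I+M_1(x,t)/k+\cdots$ the potentials $q,\mu,\nu$ are reconstructed. The first step is to study the sign structure of $\Im\theta(\cdot,\xi)$ and to locate the stationary points $\partial_k\theta(k,\xi)=0$. In the region $0<\xi<\omega_0^2$ one verifies that there are exactly two real simple stationary points $\lambda_-(\xi)<\lambda_+(\xi)$ --- this is the ``semisimple'' regime recorded in \eqref{system_dg_semisimple} --- and that $\xi=\omega_0^2$ is precisely the critical value at which they merge, which is why \eqref{omega0} marks the right edge of the region.

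\emph{$g$-function and scalar conjugation.} Next I would introduce a $g$-function supported on the single band $[\lambda_-(\xi),\lambda_+(\xi)]$, built from the radical $X(k)$ defined in \eqref{X} (with branch cut on the band), in order to trade the growing and decaying exponentials for a bounded phase; the band endpoints are exactly those fixed by \eqref{system_dg_semisimple}. Simultaneously a scalar (``$\delta$'') function removes the reflection-coefficient jump on $(-\infty,\lambda_-)\cup(\lambda_+,+\infty)$; by the Plemelj formula this function is a Cauchy-type integral of $\log A^2$ over that set with a kernel involving $X(k)^{-1}$, and its contribution to the accumulated exponent is precisely $-2\i\phi(\xi)$ with $\phi$ as in the statement. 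After conjugating by these scalars and opening lenses so that the off-band oscillations become exponentially small as $t\to\infty$, the problem decouples into an outer (global) parametrix and two local parametrices near $\lambda_\pm$.

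\emph{Parametrices and error.} The genus-zero outer parametrix can be written explicitly in terms of $X(k)$; evaluating $M_1$ from it produces the constant amplitude $-p/(2\omega)$ together with the phase $\i\omega t-\i\frac{l}{\omega}x-2\i\phi(\xi)$ for $q$ (and the analogous plane wave for $\mu$, with $\nu\to1$), where the part linear in $t$ and $x$ comes from the distinguished point attached to the boundary frequency $\omega$. At each simple stationary point $\lambda_\pm$ the local parametrix is built from parabolic-cylinder (Weber) functions; matching it to the outer solution on small circles and estimating the resulting error RHP by a small-norm argument gives the uniform $\mathcal{O}\big(t^{-1/2}\big)$ correction. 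I expect the main difficulty to be exactly this last estimate: one must show that the mismatch on the matching circles is $\mathcal{O}\big(t^{-1/2}\big)$ and that the associated singular integral operator is invertible with norm bounded uniformly as $t\to+\infty$ while $\xi$ ranges over compact subsets of $(0,\omega_0^2)$, the bound necessarily degenerating as $\xi\uparrow\omega_0^2$, where $\lambda_-$ and $\lambda_+$ coalesce and the Weber parametrices break down.
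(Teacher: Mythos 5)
The first thing to say is that the paper does not prove Theorem~\ref{Theor1} at all: it is quoted from \cite{MoK10}, and the present paper only records the ingredients of that proof, namely the Riemann--Hilbert Problem~\ref{RHP1} and, in Section~\ref{sect_g}, the explicit $g$-function $g(k,\xi)=\bigl(\frac{\omega}{2k}+\frac{1}{4\xi^2}\bigr)X(k)$ used in the plane-wave region. Measured against those recorded ingredients, your Deift--Zhou skeleton is right at the level of strategy: a scalar $\delta$-conjugation removing the reflection jump on $(-\infty,\lambda_-)\cup(\lambda_+,+\infty)$ does produce the phase shift $\phi(\xi)$ (note $1-\rho^2(k)=A^{-2}(k)$ by \eqref{rho_def}, which is why $\log A^2$ appears), and parabolic-cylinder parametrices at the real saddle points plus a small-norm argument do account for the $\mathcal{O}\bigl(t^{-1/2}\bigr)$ error.

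However, your geometric set-up contains errors that would derail an actual proof. First, the stationary points of the original phase $\theta(k,\xi)=\frac{1}{4k}+\frac{k}{4\xi^2}$ are $k=\pm\xi$; they never merge and they are not $\lambda_\pm$. The points $\lambda_\pm$ exist only after the $g$-function surgery: $\d g$ has a \emph{cubic} numerator with three real zeros $\lambda_-\le\lambda\le\lambda_+$, and \eqref{system_dg_semisimple} is Vieta's system for that cubic --- your sketch has lost the middle root $\lambda$. Consequently the critical value \eqref{omega0} is not where ``$\lambda_-$ and $\lambda_+$ coalesce'': by \eqref{lambda_-+boundary}, at $\xi=\xi_0$ one has $\lambda_-(\xi_0)=\lambda(\xi_0)<0$ while $\lambda_+(\xi_0)>0$ stays separated, and past the boundary the merged pair bifurcates into the complex conjugate points $d$, $\overline{d}$ of \eqref{lambda_-+d_system} --- precisely the transition whose solvability is the subject of the rest of the paper. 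Second, $X(k)=\sqrt{(k-E)(k-\overline{E})}$ has branch points $E$, $\overline{E}$ off the real axis, with cut along the arc $\gamma\cup\overline{\gamma}$; the ``band'' of the genus-zero model is this arc, not the real segment $[\lambda_-,\lambda_+]$. A band on a real segment with a $\theta$-type phase would yield a \emph{decaying} Zakharov--Manakov wave, whereas the finite amplitude $\frac{p}{2\omega}=E_2$ and the exact phase $\omega t-\frac{l}{\omega}x$ come from the arc model built from $\varkappa(k)$ in \eqref{kappa} together with the normalization $g-\theta\to\frac{\omega}{2}-\frac{l}{8\omega\xi^2}$ as $k\to\infty$ (recall $\frac{t}{4\xi^2}=x$); the roles of $\lambda_\pm$ are only as saddle points of $g$ carrying the local parametrices and terminating the $\delta$-jump. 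Third, you omit the condition at $k=0$: this is a half-line problem, $\theta$ has a pole at the origin, and $g$ must satisfy $g-\theta=\mathcal{O}(1)$ there --- this is what forces the factor $\frac{\omega}{2k}$ and cannot be realized by a cut on $[\lambda_-,\lambda_+]$. Finally, a convention warning: the paper's slow variable is $\xi=\sqrt{t/(4x)}$, so the plane-wave region is $\xi>\xi_0=\frac{1}{2\omega_0}$; your $\xi=x/t$ reverses the orientation and must not be used when citing \eqref{system_dg_semisimple} or \eqref{lambda_-+boundary}.
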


\begin{rem}[universality] It is claimed in \cite{MoK10} that the result of Theorem \ref{Theor1} is valid also for a more general initial datum \begin{gather*}q(x,0)=u_0(x),\end{gather*} with $u_0(x)\to 0$ sufficiently fast as $x\to+\infty$, under assumption of the absence of the discrete spectrum of the corresponding Lax pair operator. Indeed, it is natural to expect that the case of a general initial datum has the following main distinguishes with the case $u_0(x)=0$ (see for instance the case of NLS equation \cite{BM17,BIK}):
\begin{itemize}\itemsep=0pt
\item the spectral functions $A(k)$, $B(k)$ no longer have the explicit forms \eqref{rho_def}, but are constructed from the initial datum as in \cite[Section~4, formula~(31)]{MoK06}.
\item as a result, the function $\rho(k)=\frac{B(k)}{A(k)}$ is not analytic in the complex plane, and hence an additional step of its analytic approximation in the spirit of \cite{Deift1993} must be added to the series of transformations of the corresponding Riemann--Hilbert problem in~\cite{MoK10};
\item if one assumes sufficiently fast exponential decay of the initial datum $q(x,0)$ as $x\to\pm\infty$, it is standard that the function $\rho(k)$ can be extended analytically in a strip around the contour $\Sigma$. In the latter case one can skip the step with analytic approximations;
\item the functions $A(k)$, $\rho(k)$ might have degenerate behavior in the vicinity of the edge points~$E$,~$\ol{E}$ defined in~\eqref{kappa};
\item the discrete spectrum, i.e., zeros of $A(k)$, might be present. Zeros of~$A(k)$ are not necessarily simple, and correspond to solitons or breathers.
\end{itemize}
In the case when no discrete spectrum is present, the function $\rho(k)$ admits an analytic continuation in a vicinity of the contour $\Sigma$, and the behavior of $A(k)$ is not degenerate in a vicinity of the edge points, all the steps of the asymptotic analysis in~\cite{MoK10} remain without changes and hence Theorem~\ref{Theor1} holds. We see that the parameters that determine the asymptotics are determined only by the constants $l$, $p$, $\omega$, and the actual form of the initial function~$u_0(x)$ influences the leading-order asymptotic terms only as a phase shift $\phi(\xi)$.
\end{rem}

\begin{teor}[Kotlyarov--Moskovchenko \cite{MoK10}, shortened formulation, a modulated elliptic wave of finite amplitude]\label{Theor2_short}
In the region $\omega_0^2 t<x<\omega^2 t$ the solution of the IBV problem \eqref{SRS}--\eqref{bc} for $t\to\infty$ takes the form of a modulated elliptic wave
\begin{gather*}
q(x,t)=2\i\frac{\Theta_{12}(t,\xi;\infty)}{\Theta_{11}(t,\xi;\infty)}\exp\big[2\i t \widehat g_{\infty}(\xi)-2\i\widehat\phi(\xi)\big]+\mathcal{O}\big(t^{-1/2}\big),\\
 \nu(x,t)=-1+2\frac{\Theta_{11}(t,\xi;0)\Theta_{22}(t,\xi;0)}{\Theta_{11}(t,\xi;\infty)\Theta_{22}(t,\xi;\infty)}+\mathcal{O}\big(t^{-1/2}\big),\\
 \mu(x,t)=2\i\frac{\Theta_{11}(t,\xi;0)\Theta_{12}(t,\xi;0)}{\Theta_{11}^2(t,\xi;\infty)}\exp\big[2\i t \widehat g(\xi)-2\i\widehat\phi(\xi)\big]+\mathcal{O}\big(t^{-1/2}\big).
\end{gather*}
Here $\xi=\sqrt{\frac{t}{4x}}$ is a slow variable, and $\Theta_{11}(t,\xi;\infty)$, $\Theta_{12}(t,\xi;\infty)$, $\Theta_{22}(t,\xi;\infty)$, $\Theta_{11}(t,\xi;0)$, $\Theta_{12}(t,\xi;0)$, $\Theta_{22}(t,\xi;0)$, $\widehat g_{\infty}(\xi)$, $\widehat \phi(\xi)$ are some functions given explicitly in terms of the initial data.
\end{teor}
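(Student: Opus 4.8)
Since the Riemann--Hilbert derivation of these asymptotic formulas is already carried out in \cite{MoK10} \emph{conditionally} on the solvability of the associated modulation equations, it suffices to supply that missing ingredient. The plan is thus to keep the nonlinear steepest descent analysis of \cite{MoK10} as a black box and to prove that the Whitham type system \eqref{system_dg_semisimple}, which determines the moving branch points $\lambda_\pm(\xi)$ of the relevant genus-one Riemann surface (with branch points $\lambda_\pm(\xi)$ together with the fixed complex edge points $E$, $\ol{E}$ of \eqref{kappa}), admits a solution with the qualitative properties that close the steepest descent argument, for every $\xi$ in the interval $\frac{1}{2\omega}<\xi<\frac{1}{2\omega_0}$ that corresponds via $\xi=\sqrt{t/(4x)}$ to the elliptic region $\omega_0^2 t<x<\omega^2 t$.

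First I would recall the construction of the $g$-function: one seeks a scalar function, holomorphic off the bands, with prescribed jumps and the correct growth, whose differential is forced to be a meromorphic differential on the underlying curve built from $X(k)$ of \eqref{X}. Demanding that $g$ be single valued (vanishing of the relevant $b$-period) and that its real part have the right signature on the bands and gaps produces exactly the closed algebraic-integral relations \eqref{system_dg_semisimple} for the unknown endpoints $\lambda_\pm(\xi)$. I would then identify the boundary behaviour: as $\xi$ approaches the right endpoint $\frac{1}{2\omega_0}$ of the interval (equivalently $x\to\omega_0^2 t$) the band should pinch, degenerating the genus-one configuration to the genus-zero plane wave configuration of Theorem~\ref{Theor1}, and as $\xi$ approaches the left endpoint $\frac{1}{2\omega}$ (equivalently $x\to\omega^2 t$) it should degenerate toward the vanishing dispersive regime; checking these two degenerations supplies the boundary values of $\lambda_\pm$ and guarantees the matching with the neighbouring asymptotic zones.

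For intermediate $\xi$ I would obtain existence by a continuation argument based on the implicit function theorem. Differentiating \eqref{system_dg_semisimple} in $\xi$, the derivatives $\frac{\d\lambda_\pm}{\d\xi}$ are expressed through period integrals of holomorphic and meromorphic differentials on the curve; using the Riemann bilinear relations one shows these periods have a definite sign, which yields strict monotonicity of $\lambda_\pm(\xi)$ and, simultaneously, non-degeneracy of the Jacobian of the system. The implicit function theorem then upgrades the local solution branch, anchored at the degenerate boundary values, to a smooth solution on the whole interval; along the way one verifies that the endpoints never collide (so the genus stays equal to one) and that the sign inequalities making the off-band jump contributions exponentially small are preserved.

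The main obstacle is precisely this global non-degeneracy step. One must rule out, uniformly in $\xi$, both the merging of the moving endpoints and the loss of the inequalities controlling the sign of the real part of the phase on the lenses; establishing the strict definiteness of the period integrals that govern $\frac{\d\lambda_\pm}{\d\xi}$ is the heart of the matter. Once monotonicity and the correct boundary degeneracies are secured, feeding the resulting endpoints back into the steepest descent scheme of \cite{MoK10} reproduces the stated theta-quotient leading terms with $\mathcal{O}\big(t^{-1/2}\big)$ error, completing the proof of the theorem.
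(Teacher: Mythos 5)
You correctly identify the top-level architecture --- keep the nonlinear steepest descent analysis of \cite{MoK10} as a black box and supply the missing solvability of the modulation equations --- and this is indeed how the paper justifies Theorem~\ref{Theor2_short} (via Theorem~\ref{Theor_main}). But you attach the \emph{wrong} modulation system to the elliptic region: \eqref{system_dg_semisimple} is the genus-zero system of the plane-wave sector $0<x<\omega_0^2t$, with three \emph{real} unknowns $\lambda_-\leq\lambda\leq\lambda_+$, and its solvability was already settled in \cite{MoK10}. In the region $\omega_0^2t<x<\omega^2t$ the unknowns are $\lambda_\pm\in\mathbb{R}$ together with a moving \emph{complex} point $d(\xi)=d_1+\i d_2$, $d_2>0$, and the genus-one curve has branch points $E$, $\ol E$, $d$, $\ol d$ --- not $\lambda_\pm$, $E$, $\ol E$ as you state; the $\lambda_\pm$ enter only as real zeros of the numerator of $\d\widehat g$ in \eqref{widehat_g}. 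The system to be solved is \eqref{lambda_-+d_system}, including the period condition $\int_E^{\ol E}\d\widehat g=0$. Since your configuration has no moving complex branch point, the boundary degenerations that actually occur ($d\to\lambda_-\in\mathbb{R}$ as $\xi\to\frac{1}{2\omega_0}$, and $d\to E$, $\lambda_\pm\to\pm|E|$ as $\xi\to\frac{1}{2\omega}$) are not even expressible in your setup, so the anchoring step of your continuation argument fails as written.

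Even granting the correct system, the heart of your argument --- that the Riemann bilinear relations give the period integrals controlling $\frac{\d\lambda_\pm}{\d\xi}$ a definite sign, hence a non-degenerate Jacobian --- is asserted, not proved, and you yourself flag it as the main obstacle; this is precisely where the paper has to work. The paper proceeds by explicit reduction rather than soft period arguments: the first two equations of \eqref{lambda_-+d_system} determine $\lambda_\pm$ as the (automatically real and distinct) roots of an explicit quadratic, the third gives $\cos\varphi$ as the explicit function \eqref{cosfi} of $r=|d|$, and the period condition collapses to a single scalar equation $F(r,\xi)=0$; sign-definiteness of $F_r$ is then equivalent to the nonvanishing of an explicit degree-$12$ polynomial $P(x,\alpha)$ for $x>1$, $1<\alpha<\alpha_0$, which is the content of Lemma~\ref{lem_problem_polynomial} and requires a genuinely nontrivial elementary analysis (the substitutions $z=\alpha/x^3$, $w=z+\frac1z$ and the case analysis in the lemma). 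You also omit an admissibility check that has no analogue in your scheme: one must verify that the right-hand side of \eqref{cosfi} has modulus at most $1$, so that the solution really furnishes a branch point $d$ in the upper half-plane; it is exactly this check that produces the threshold values $\alpha_0$ \eqref{alpha_0} and $x_0$ \eqref{x0}. Without a proof of the definiteness of your periods (or an equivalent monotonicity statement) and without the constraint on $d$, your proposal does not close the argument.
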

\begin{rem}[universality] It is claimed in \cite{MoK10} that the result of Theorems \ref{Theor2_short} and~\ref{Theor2_full} is true also for more general initial data $u_0(x)\to 0$ as $x\to+\infty$ under assumption of the absence of the discrete spectrum of the Lax pair operator. In the latter case all the $\Theta$-functions are determined by the parameters~$B_g$, $B_{\zeta}$ $\Delta$ (see Theorem~\ref{Theor2_full}). Further, the parameters $B_g$, $B_{\zeta}$ are fully determined solely by the constants $p$, $l$, $\omega$, and do not require knowledge of the actual form of the initial function. The only quantity that depends on the actual form of the initial function $u_0(x)$ is $\Delta$. This shows that the solution of the IBV problem behaves {\textit{universally}} with respect to the initial function.
\end{rem}
\begin{teor}[Moscovchenko \cite{Mo09}, vanishing (as $t\to\infty$) dispersive wave]\label{Theor3} In the region $x>\omega^2 t$ the solution of the IBV problem \eqref{SRS}--\eqref{bc} for $t\to\infty$ takes the form of a vanishing self-similar wave
\begin{gather*}
q(x,t)=2\sqrt{\frac{\xi^3\eta(\xi)}{t}}\exp\big[2\i\sqrt{xt}-\i\eta(\xi)\log\sqrt{x t}+\i\varphi(\xi)\big]\\
\hphantom{q(x,t)=}{} +2 \sqrt{\frac{\xi^3\eta(-\xi)}{t}}\exp\big[{-}2\i\sqrt{xt}+\i\eta(-\xi)\log\sqrt{xt}+\i\varphi(-\xi)\big]+_\mathcal{O}\big(t^{-1/2}\big) ,
\end{gather*}
where the quantity $\xi$ and functions $\eta(k)$, $\varphi(k)$ are given as follows:
\begin{gather*}\eta(k)=\frac{1}{2\pi}\log\big(1-\rho^2(k)\big),\qquad \xi=\sqrt{\frac{t}{4x}},\\
 \varphi(k)=\frac{\pi}{4}-3\eta(k)\log 2-\arg\Gamma(-\i\eta(k))+\frac{1}{\pi}\int_{-\xi}^{\xi}\log|s-k|\d\log(1-\rho^2(s)).\end{gather*}
Here $\Gamma$ is the Euler gamma-function, $\rho(k)=\frac{\varkappa^2(k)-1}{\varkappa^2(k)+1}$, and $\varkappa(k)$ is determined in~\eqref{kappa}.
\end{teor}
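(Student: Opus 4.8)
The plan is to prove Theorem~\ref{Theor3} by the Deift--Zhou nonlinear steepest descent method applied to the Riemann--Hilbert problem (RHP) associated with the IBV problem \eqref{SRS}--\eqref{bc}, in the formulation of~\cite{MoK06}. The region $x>\omega^2 t$ is singled out by the geometry of the phase: the oscillatory factors $\e^{\pm 2\i t\theta(k)}$ entering the jump on the contour~$\Sigma$ carry a phase $\theta(k;x,t)$ with two \emph{real}, simple stationary points located symmetrically at $k=\pm\xi$, $\xi=\sqrt{t/(4x)}$. This is exactly the configuration that produces a self-similar dispersive wave of Zakharov--Manakov type, in contrast with the band structure responsible for the elliptic regime of Theorem~\ref{Theor2_short}.

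First I would prepare the RHP for descent. Using the triangular factorizations of the jump matrix together with the reflection coefficient $\rho(k)=\frac{\varkappa^2(k)-1}{\varkappa^2(k)+1}$, I would introduce a scalar conjugating function $\delta(k)$ solving $\delta_+(k)=\delta_-(k)\big(1-\rho^2(k)\big)$ on the segment $(-\xi,\xi)$. By Plemelj, $\delta(k)=\exp\big[\frac{1}{2\pi\i}\int_{-\xi}^{\xi}\frac{\log(1-\rho^2(s))}{s-k}\,\d s\big]$; this object carries the local exponent $\eta(k)=\frac{1}{2\pi}\log\big(1-\rho^2(k)\big)$ and, through its local power behavior $\delta(k)\sim(k\mp\xi)^{\pm\i\eta(\pm\xi)}$ together with the regularized value of $\log\delta$ at $\pm\xi$, it generates both the logarithmic phase correction $-\eta\log\sqrt{xt}$ and the integral term $\frac{1}{\pi}\int_{-\xi}^{\xi}\log|s-k|\,\d\log(1-\rho^2(s))$ of~$\varphi$. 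Conjugating by $\delta^{\sigma_3}$ and opening lenses along the steepest-descent rays through $\pm\xi$ renders the jumps off the stationary points exponentially small in~$t$, so the analysis localizes to two shrinking disks centered at $k=\pm\xi$.

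Next I would build the local parametrices. In each disk the rescaled phase is quadratic, so the model problem is the parabolic-cylinder RHP, solved by Weber functions $D_\nu$ whose index is determined by $\eta(\pm\xi)$. The connection coefficients of the parabolic-cylinder equation are expressed through the Euler gamma function, which is the source of the term $-\arg\Gamma(-\i\eta(k))$ in $\varphi(k)$, while the deterministic constants $\frac{\pi}{4}-3\eta(k)\log 2$ come from the normalization of the model solution and from the $\sqrt{xt}$-rescaling of the local variable. Away from the two disks the global parametrix is trivial in this region (no band is present), so the leading behavior is carried entirely by the local solutions.

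The remaining steps are routine. Matching each local parametrix to the global one on the boundary circles leaves a residual RHP whose jump is uniformly $I+\C O(t^{-1/2})$; the small-norm theorem then bounds its solution and, after applying the reconstruction formulas that recover $q$, $\mu$, $\nu$ from the $k\to\infty$ and $k\to 0$ expansions of the fully transformed matrix, one obtains the two oscillatory contributions centered at $\pm\xi$, with amplitudes $2\sqrt{\xi^3\eta(\pm\xi)/t}$ and phases $\pm 2\i\sqrt{xt}\mp\i\eta(\pm\xi)\log\sqrt{xt}+\i\varphi(\pm\xi)$, together with the stated error. I expect the genuine difficulty to lie not in these estimates but in the non-generic spectral data of this IBV problem: the functions $A(k)$ and $\rho(k)$ may degenerate at the edge points $E$, $\ol E$ of~\eqref{kappa}, and the boundary conditions force $\Sigma$ to be a multi-ray contour rather than the real axis. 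The delicate point is to keep $\delta(k)$ and the parabolic-cylinder parametrices uniformly controlled as $\pm\xi$ approaches these edge points, i.e.\ as $x\downarrow\omega^2 t$, so that the error estimate remains uniform up to the boundary of the region.
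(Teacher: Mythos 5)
You should note at the outset that the paper you are being compared against contains no proof of Theorem~\ref{Theor3} at all: it is recalled as background from Moscovchenko~\cite{Mo09}, and the paper's own contribution (Theorem~\ref{Theor_main}, Lemma~\ref{lem_problem_polynomial}) concerns the adjacent elliptic region $\omega_0^2t<x<\omega^2t$. Measured against the cited source, your outline is the right one and essentially the standard Deift--Zhou route~\cite{Deift1993}: from $\Im\theta(k)=\frac{(|k|^2-\xi^2)\Im k}{4\xi^2|k|^2}$ (Fig.~\ref{Fig_theta_signature}) the saddle points are $k=\pm\xi$; just above the interval $(-\xi,\xi)$ one has $\Im\theta<0$, which forces the lower--diagonal--upper factorization there and hence your scalar $\delta$ with jump $1-\rho^2$ on exactly $(-\xi,\xi)$, consistent with the integral $\int_{-\xi}^{\xi}$ in $\varphi$; the parabolic-cylinder parametrices at $\pm\xi$ are indeed the source of $\arg\Gamma(-\i\eta)$ and $\frac{\pi}{4}-3\eta\log 2$, and the global parametrix is trivial since no band is present. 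A consistency check you could have added for free: $A^2-B^2=1$ gives $1-\rho^2=A^{-2}$, and $|\varkappa(k)|=1$ for real $k$ makes $\rho$ purely imaginary on $\mathbb{R}$, so $1-\rho^2\geq1$, $\eta\geq0$, and the amplitudes $2\sqrt{\xi^3\eta(\pm\xi)/t}$ are real, as the statement requires.

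There are, however, concrete soft spots. First, your claim that opening lenses ``renders the jumps off the stationary points exponentially small'' fails on the arcs: $\gamma\cup\ol\gamma$ is not disjoint from the real axis but crosses it at $k=|E|^2/E_1$ (Fig.~\ref{Fig_Contour_Sigma}), where $\Im\widehat\theta=0$ and the triangular jumps $f\e^{\pm2\i\widehat\theta}$ merely oscillate; since $|E|^2/E_1<-|E|<-\xi$ this self-intersection is non-stationary and contributes only at subleading order, but that needs a local argument, not the blanket assertion. Relatedly, you assume rather than verify the sign of $\Im\theta$ on the arcs; it is precisely the condition $\xi<|E|=\frac{1}{2\omega}$, i.e., $x>\omega^2t$, that keeps the circle $|k|=\xi$ away from $E$, $\ol E$ and the arcs inside the decay regions --- this is what delimits the theorem's region and should be exhibited. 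Second, aiming at ``the stated error'' is a trap: the displayed remainder $\mathcal{O}\big(t^{-1/2}\big)$ is of the same order as the leading terms (evidently a misprint inherited from the source), so a proof concluding exactly that is vacuous; the small-norm step, done properly on a fixed ray, yields a remainder of order $t^{-1}\log t$, and you should say so. Finally, your flagged ``genuine difficulty'' is partly moot: for this IBV problem the spectral functions are the explicit ones in \eqref{rho_def}, with $A(k)\sim\frac12(k-E)^{-1/4}$ near $E$, so no degeneracy at the edge points occurs (that caveat, raised in the Remark after Theorem~\ref{Theor1}, concerns general initial data), and since the theorem is a fixed-$\xi$ statement, uniformity as $x\downarrow\omega^2t$ is not required for the claim.
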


Theorem \ref{Theor2_short} (Theorem~\ref{Theor2_full}) was proved under an implicit assumption that the parameters of the corresponding $g$-function exist in the specified region. However, this was not proved in~\cite{MoK10}.
The goal of this paper is to establish the existence of such parameters, and thus justify the results in~\cite{MoK10}. We establish this in Theorem~\ref{Theor_main}, which is based on Lemma~\ref{lem_problem_polynomial}.

\looseness=1 We assume that the points $\d(\xi)$, $\ol{\d(\xi)}$ in equations~\eqref{lambda_-+d_system} play the role of branch points of Riemann surfaces in the Whitham modulation theory, and hence the equations~\eqref{lambda_-+d_system} are Whitham type equations (indeed, similar statement is valid in the case of the Korteweg--de Vries equation and other integrable equations (cf.~\cite{Novik}). We do not know any reference where the \mbox{genus-1} solution of SRS was previously studied, or where SRS was previously studied by Whitham theory).

There is a lot of bibliography devoted to the Whitham modulation theory approach to asymptotics of solutions of integrable equations, including cases with different finite-gap boundary conditions as $x\to\pm\infty$, see (see \cite{ABW17,Bikb3,Bikb1,Bikb4,Bikb5,Bikb6, BikN2,Bikb2, BD17,Grava16,GK07, GPT09,Novik} and the bibliography therein). Most of these results were devoted to initial-value problems associated with self-adjoint Lax operators, but there are also a few results for problems associated with non-self-adjoint Lax operators~\cite{BikN2,Novik}. We would like to mention that in our case the associated Lax operator is also non-self-adjoint, which allows possible presence of breathers and solitons of multiple order, and that the spectrum of the associated scattering problem~$\Sigma$ (defined in~\eqref{SigmaOmega}) does not lie on the real axis, which might indicate modulation instability (see \cite{BF15, K77,K17,KS99} where modulation instability for other equations is considered. We do not know a paper where this issue is considered for SRS).

In the Whitham modulation theory the question of unique solvability of Whitham equations plays the central role and the main attention is devoted to analysis of complex Whitham deformations of the corresponding Riemann surfaces. The evolution of the branch points is governed by a~transcendental system of equations, which is typically very hard to analyze, especially in the case of non-self-adjoint Lax operators~\cite{Novik}.

Recently a rigorous Riemann--Hilbert problem scheme was adjusted to the problems with non-vanishing initial data. This was achieved by introduction of the so-called $g$-function surgery approach. Equations for parameters of this $g$-function play role of Whitham equation in the Whitham modulation theory. For step-like initial data, first results were obtained by Buckingham, Venakides~\cite{BV} and independently in Boutet de Monvel, Its, Kotlyarov~\cite{BIK} for nonlinear Schr\"odinger equation (NLS). In \cite{BV} the construction of the corresponding $g$-function was done with the help of Cauchy integrals, but the system of equations for parameters of this $g$-function was too complicated to be analyzed. On the contrary, the approach in~\cite{BIK} employed the construction of $g$-function in terms of Abelian integrals in the corresponding cut complex plane. In this approach the existence of parameters for the corresponding $g$-function followed from positiveness of a polynomial of two real variables of degree~2 in a given domain.

T.~Claeys \cite{Claeys10} studied the long-time asymptotics for the solution of the Korteweg--de Vries equation with a particular choice of unbounded initial datum which grows as $\sqrt[3]{-x} $ for large $x$. In this case the solvability of the corresponding Whitham type equations was established in \cite{Potemin88}.

Further the approach in \cite{BIK} was extended to the modified Korteweg--de Vries equation (MKdV) \cite{KM_10,KM_12,KM_12_hyp,M_11a,M_11} (the degree of the corresponding polynomial was 2), and to Camassa--Holm equation \cite{M_CH_15,M_CH_16}. The step-like problem for the Korteweg--de Vries equation~\cite{EGKT} employs the same $g$-function, as MKdV. Initial value problem for NLS with more general type of initial function~\cite{BM17} also employed the same $g$-function, as in~\cite{BIK}.

In this article we show, that the Whitham type equations for the parameters of the corresponding $g$-function for the SRS model is reduced to prove positiveness of a polynomial of degree~12 in a given domain. This approach also produces a nice elementary problem (Lemma~\ref{lem_problem_polynomial}), which is however far from being trivial and can be offered to students in some mathematical olympiads. An interested reader might try to solve it first before looking in the solution.

The structure of the paper is the following: in Section~\ref{sect_RHP} we recall the Riemann--Hilbert prob\-lem formulation for the IBV problem \eqref{SRS}--\eqref{bc} from \cite{MoK06,MoK10}, and the definition of the corresponding phase $g$-functions, which are used in the asymptotic analysis of oscillatory Riemann--Hilbert prob\-lems via the nonlinear steepest descent method. In Section~\ref{sect_proof} we prove our main Theorem~\ref{Theor_main} and the underlying Lemma~\ref{lem_problem_polynomial}. It is remarkable that the values of $\alpha_0$ \eqref{alpha_0}, $x_0$~\eqref{x0} from Lemma~\ref{lem_problem_polynomial} are given for free within the framework of the Riemann--Hilbert problem analysis via the $g$-function surgery approach.

\section[Original Riemann--Hilbert problem and $g$-functions]{Original Riemann--Hilbert problem and $\boldsymbol{g}$-functions}\label{sect_RHP}
It was shown in \cite{Mo09,MoK06,MoK09,MoK10} that the solution of the IBV problem \eqref{SRS}--\eqref{bc} can be obtained from the solution of the following Riemann--Hilbert problem:

\begin{RHP}\label{RHP1}
Find a $2\times2$ matrix-valued function $M(x,t;k)$ that is sectionally analytic in $k\in\mathbb{C}\setminus\Sigma$ and satisfies the following properties:
\begin{enumerate}\itemsep=0pt
 \item[$1)$] jumps: $M_-=M_+J$, where
\begin{gather*}J(x,t;k)=\begin{cases}
 \begin{pmatrix}
 1&\rho(k)\e^{-2\i \widehat\theta(x,t;k)}\\-\rho(k)\e^{2\i \widehat\theta(x,t;k)} & 1-\rho^2(k)
 \end{pmatrix},& k\in\mathbb{R}\setminus\{0\},\\
\begin{pmatrix}
 1&0\\f(k)\e^{2\i \widehat\theta(x,t;k)}&1
\end{pmatrix},& k\in\gamma,\\
\begin{pmatrix}
 1&f(k)\e^{-2\i \widehat\theta(x,t;k)}\\0&1
\end{pmatrix},& k\in\widehat{\gamma},
 \end{cases}
\end{gather*}
\item[$2)$] end points behavior: $M(x,t;k)$ is bounded in the vicinities of the points $k=E,0,\ol E$,
\item[$3)$] asymptotics:
$M(x,t;k)=I+\mathcal{O}\big(\frac{1}{k}\big)$, $k\to\infty$.
\end{enumerate}
Here
\begin{gather}\label{kappa}\widehat\theta(x,t;k)=\frac{1}{4k}t+k x, \qquad \varkappa(k)=\sqrt[4]{\frac{k-\ol E}{k-E}},\qquad E\equiv E_1+\i E_2=\frac{l+\i p}{2\omega},\\
\label{rho_def}\rho(k)=\frac{B(k)}{A(k)},\qquad B(k)=\frac12\(\varkappa(k)-\frac{1}{\varkappa(k)}\),\qquad A(k)=\frac12\(\varkappa(k)+\frac{1}{\varkappa(k)}\),\\
\label{f_def}f(k)=\rho_-(k)-\rho_+(k)=\frac{\i}{A_-(k)A_+(k)},\end{gather}
and the contour $($see Fig.~{\rm \ref{Fig_Contour_Sigma})}
\begin{gather}\label{SigmaOmega}\Sigma\equiv \left\{k\in\mathbb{C}|\ \Im \Omega(k)=0\right\},\qquad \textrm{where}\qquad\Omega(k)=\frac{\omega}{2k}\sqrt{(k-E)(k-\ol E)}, \end{gather}
consists of the real line $\Im k=0$ and the circle arc $\gamma\cup\ol\gamma$, which is defined by equations~{\rm \cite{MoK10}}
\begin{gather*}\(k_1-\frac{|E|^2}{2E_1}\)^2+k_2^2=\(\frac{|E|^2}{2E_1}\)^2,\qquad k_1^2+k_2^2\geq|E|^2,\qquad k_1=\Re k,\qquad k_2=\Im k.\end{gather*}
The subarcs $\gamma$ and $\ol\gamma$ are symmetric with respect to the real line and are divided by it. The orientation of the contour $\Sigma$ is as follows: from $-\infty$ to $+\infty$ and from $E$ to $\ol{E}$.
\end{RHP}

\begin{figure}[t]\centering\small
\begin{tikzpicture}
\node at (0,0) {\includegraphics[width=80mm]{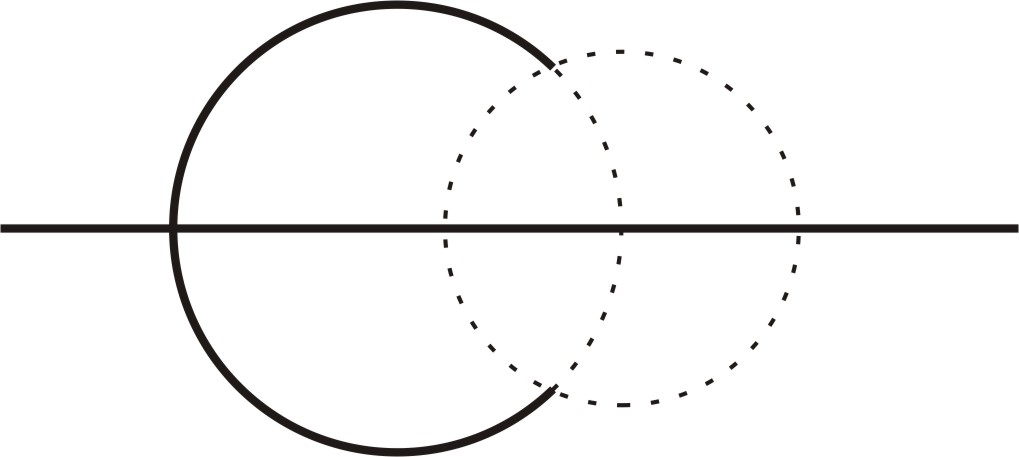}};
\node at (-3.0,-0.35) {$\frac{|E|^2}{E_1}$};
\node at (-1.0,-0.25) {$-|E|$};
\node at (0.7,-0.2) {$0$};
\node at (1.9,-0.25) {$|E|$};
\node at (-2,1.65) {$\gamma$};
\node at (-2,-1.7) {$\ol\gamma$};
\node at (0.5,1.55) {$E$};
\node at (0.5,-1.55) {$\ol{E}$};
\node at (3.8,0.2) {$k$};
\end{tikzpicture}

\caption{Contour $\Sigma$.}\label{Fig_Contour_Sigma}
\end{figure}

\begin{lem}[\cite{MoK10}] The Riemann--Hilbert Problem~{\rm \ref{RHP1}} has a unique solution and the solution of the IBV problem \eqref{SRS}--\eqref{bc} can be obtained from the solution of the Riemann--Hilbert Problem~{\rm \ref{RHP1}} in the following way:
\begin{gather*}q(x,t):=2\i\lim\limits_{k\to\infty}kM(x,t;k)_{12}=2\i\lim\limits_{k\to\infty}kM(x,t;k)_{21},\\
 Q(x,t)=\begin{pmatrix}\nu(x,t)&\i\mu(x,t)\\-\i\ol\mu(x,t)&-\nu(x,t)\end{pmatrix}:=-M(x,t;k=0)\sigma_3M^{-1}(x,t;k=0).\end{gather*}
\end{lem}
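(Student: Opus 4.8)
The plan is to establish three things in turn: that Riemann--Hilbert Problem~\ref{RHP1} admits at most one solution, that it admits at least one, and that the formulas for $q$ and $Q$ indeed return the solution of \eqref{SRS}--\eqref{bc}.

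Uniqueness I would obtain from Liouville's theorem. Each branch of $J$ has unit determinant: on $\mathbb{R}$ one computes $\det J=(1-\rho^2)+\rho^2=1$, and on $\gamma$, $\widehat\gamma$ the matrices are unipotent. Hence for any solution $M$ the scalar $\det M$ inherits no jump across $\Sigma$, is bounded near $k=E,0,\ol E$ by property~$2)$, and equals $1+\mathcal{O}(1/k)$ at infinity; being entire and bounded it is identically $1$, so $M$ is everywhere invertible. Given two solutions $M,\widetilde M$, the ratio $R=M\widetilde M^{-1}$ satisfies $R_-=M_+JJ^{-1}\widetilde M_+^{-1}=R_+$, so it has no jump, is bounded at the three special points, and tends to $I$; thus $R\equiv I$ and $M=\widetilde M$.

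Existence is the step I expect to be the real obstacle. I would recast Problem~\ref{RHP1} as a singular integral equation on $\Sigma$ and appeal to the standard fact that the associated Cauchy operator is Fredholm of index zero, so that solvability reduces to a vanishing lemma: the homogeneous problem, with the same jumps but $M=\mathcal{O}(1/k)$ at infinity, has only the trivial solution. The natural mechanism is the Schwarz symmetry built into the data --- $E\mapsto\ol E$ under $k\mapsto\ol k$ and $\rho$ real on $\mathbb{R}$ --- which makes $M(k)M^{\dagger}(\ol k)$ continuous across $\mathbb{R}$; pairing this combination with itself along $\mathbb{R}$ and using the sign of $1-\rho^2$ forces $M\equiv0$. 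The difficulty specific to our contour is that $\Sigma$ is not the real axis: the triangular jumps on the arcs $\gamma\cup\ol\gamma$ and the passage of $\Sigma$ through the singular point $k=0$ of $\widehat\theta$ both generate boundary terms that must be shown to vanish, and the non-self-adjointness leaves open in general the possibility of spectral singularities; controlling these contributions is where the work concentrates.

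For the reconstruction I would remove the explicit $(x,t)$-dependence from the jump. With $\widehat\theta=\frac{t}{4k}+kx$ one verifies $J=\e^{-\i\widehat\theta\sigma_3}J_0\e^{\i\widehat\theta\sigma_3}$, where $J_0$ is independent of $x$ and $t$, so $\Psi:=M\e^{-\i\widehat\theta\sigma_3}$ obeys the constant jump $\Psi_-=\Psi_+J_0$. Therefore $\Psi_x\Psi^{-1}$ and $\Psi_t\Psi^{-1}$ have no jump across $\Sigma$ and extend, by Liouville together with the endpoint boundedness, to rational functions of $k$. Since $\partial_x\widehat\theta=k$ is regular at the origin, $\Psi_x\Psi^{-1}=-\i k\sigma_3+U(x,t)$ is a polynomial of degree one, whereas $\partial_t\widehat\theta=\frac{1}{4k}$ forces a simple pole, giving $\Psi_t\Psi^{-1}=k^{-1}V_{-1}(x,t)$. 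Expanding $M=I+M_1/k+\cdots$ at infinity yields $U=-\i[M_1,\sigma_3]$, whose $(1,2)$ entry is $2\i(M_1)_{12}=q$, while matching the residue at $k=0$ gives $V_{-1}=-\frac{\i}{4}M(x,t;0)\sigma_3M^{-1}(x,t;0)=\frac{\i}{4}Q$. Hence $\Psi$ solves the Lax pair $\Psi_x=(-\i k\sigma_3+U)\Psi$, $\Psi_t=k^{-1}V_{-1}\Psi$, and its zero-curvature condition, split into the $k^0$ part $U_t=\i[\sigma_3,V_{-1}]$ and the $k^{-1}$ part $(V_{-1})_x=[U,V_{-1}]$, reproduces the system~\eqref{SRS} --- for instance the $k^0$ part gives $2\i q_t=\mu$, and the Schwarz symmetry guarantees $\nu$ real and the correct conjugacy of $\mu$. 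The explicit spectral data $\rho$, $f$ of \eqref{rho_def}--\eqref{f_def} are precisely what encode the initial and boundary values \eqref{ic}--\eqref{bc}, which completes the identification.
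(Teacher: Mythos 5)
First, a point of comparison: the paper you were given does not prove this lemma at all --- it is imported verbatim from \cite{MoK10} (see also \cite{MoK06}), where it is established by the Fokas simultaneous spectral analysis of the Lax pair: the matrix $M$ is assembled \emph{explicitly} from eigenfunctions constructed from the data \eqref{ic}, \eqref{bc}, so existence is constructive and the initial and boundary values are encoded by construction rather than verified a posteriori. Your uniqueness argument (unimodular jumps, $\det M\equiv1$, Liouville for the ratio) and your reconstruction via $\Psi=M\e^{-\i\widehat\theta\sigma_3}$ are sound in outline and standard; in particular your zero-curvature bookkeeping checks out: with $V_{-1}=\frac{\i}{4}Q$ the $k^0$ component $U_t=\i[\sigma_3,V_{-1}]$ does give $2\i q_t=\mu$, and the $k^{-1}$ component gives $Q_x=[U,Q]$, i.e., the remaining two equations of \eqref{SRS}.

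That said, there are genuine gaps. (a) Existence is the heart of the lemma and you only announce a program (Fredholm index zero plus vanishing lemma) without executing it; moreover your symmetry input is wrong in detail. On $\mathbb{R}$ the ratio $(k-\ol E)/(k-E)$ is unimodular, so $\varkappa(k)$ in \eqref{kappa} is unimodular; hence $A(k)=\Re\varkappa$ is real but $B(k)$ is purely imaginary, so $\rho(k)$ is \emph{purely imaginary} on $\mathbb{R}$, not real. The correct mechanism is then that $1-\rho^2=1+|\rho|^2\geq1$ and the jump on $\mathbb{R}$ is Hermitian positive definite, which is exactly what the Zhou-type vanishing lemma needs --- this actually removes your worry about ``the sign of $1-\rho^2$'', but as written your pairing argument rests on a false premise. (b) The last sentence of your proof (``the explicit spectral data are precisely what encode the initial and boundary values'') is an assertion, not a proof: showing that the reconstructed triple satisfies $q(x,0)=0$, $\mu(0,t)=p\e^{\i\omega t}$, $\nu(0,t)=l$ requires either evaluating the Riemann--Hilbert problem at $t=0$ and at $x=0$, where it degenerates to explicitly solvable problems, or running the direct construction of \cite{MoK06,MoK10} from the data --- this is real content of the lemma, not a footnote. (c) Two smaller but nontrivial omissions: the equality $\lim k M_{12}=\lim k M_{21}$ and the reality of $\nu$ rest on a Schwarz-symmetry relation for $M(\ol k)$ that you invoke but never establish; and the well-definedness of $M(x,t;k=0)$ in the formula for $Q$ needs an argument, since $\Sigma$ passes through the origin and the jump matrix has no continuous limit there (the factor $\e^{2\i\widehat\theta}$ oscillates as $k\to0$ along $\mathbb{R}$), so one must show that $M_\pm\sigma_3M_\pm^{-1}$ nevertheless agree at $k=0$ --- for instance via the removability argument for $k\,\Psi_t\Psi^{-1}$ that is implicit in your own reconstruction step.
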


Thus, to find asymptotics of the solution of the IBV problem it is enough to find an asymptotics of the solution of the Riemann--Hilbert problem. The authors in \cite{Mo09,MoK06,MoK09,MoK10} then perform a series of transformations of this Riemann--Hilbert problem in the spirit of Deift--Zhou steepest descent method in order to reduce the problem to some explicitly solvable model problem and small-norm problems.

In turn, the crucial role in the asymptotic analysis of the Riemann--Hilbert Problem~\ref{RHP1} is played by the decomposition of the complex plane into regions where $\Im \widehat\theta\gtrless0$. In the long time asymptotics it is convenient to introduce a slow changing variable
\begin{gather*}
\xi=\sqrt{\frac{t}{4x}}
\end{gather*}
and a regularized phase function \begin{gather*}\theta(k,\xi)=\frac{1}{4k}+\frac{k}{4\xi^2},\qquad \textrm{so that}\qquad \widehat\theta(x,t;k)\equiv t\theta(k,\xi).\end{gather*}

For different values of $\xi$, depending on the mutual location of the lines $\Im\theta(k,\xi)=0$ and the contour $\Sigma$, V.~Kotlyarov and A.~Moscovchenko~\cite{MoK10} introduced new phase functions $g(k,\xi)$, $\widehat g(k,\xi)$, that mimics some properties of $\theta$, such as behavior as $k\to\infty$ and $k\to0$, distribution of signs of $\Im\theta$, but also have different properties on the arcs~$\gamma$,~$\ol{\gamma}$.

The necessity of new functions $g$, $\widehat g$ appear when the curves $\Im\theta=0$ start to intersect the contour $\Sigma$ of the original Riemann--Hilbert Problem~\ref{RHP1}. Below we briefly describe the properties of $\theta$, $g$, $\widehat g$, and then prove the solvability of equations for parameters of $\widehat g$.

\subsection[Region $x>\omega^2 t$ (Zakharov--Manakov type region). A self-similar vanishing (as $t\to\infty$) wave]{Region $\boldsymbol{x>\omega^2 t}$ (Zakharov--Manakov type region).\\ A self-similar vanishing (as $\boldsymbol{t\to\infty}$) wave}
In this region the original phase function $\theta=\frac{1}{4k}+k\frac{x}{t}$ is appropriate in asymptotic analysis of the corresponding Riemann--Hilbert Problem~\ref{RHP1}. Lines where $\Im \theta=0$ are as follows~\cite{MoK10} (see Fig.~\ref{Fig_theta_signature}):
\begin{gather*}\Im\theta(k)=\frac{|k|^2-\xi^2}{4\xi^2|k|^2}\Im k.\end{gather*}

This function is appropriate in asymptotic analysis until the bold circle reaches end points $E$,~$\overline{E}$ of the arcs $\gamma$, $\ol{\gamma}$. This happens when $\xi=|E|$, i.e., $x=\omega^2 t$.

\begin{figure}[t]\centering\small
\begin{tikzpicture}
\node at (0,0) {\includegraphics[width=80mm]{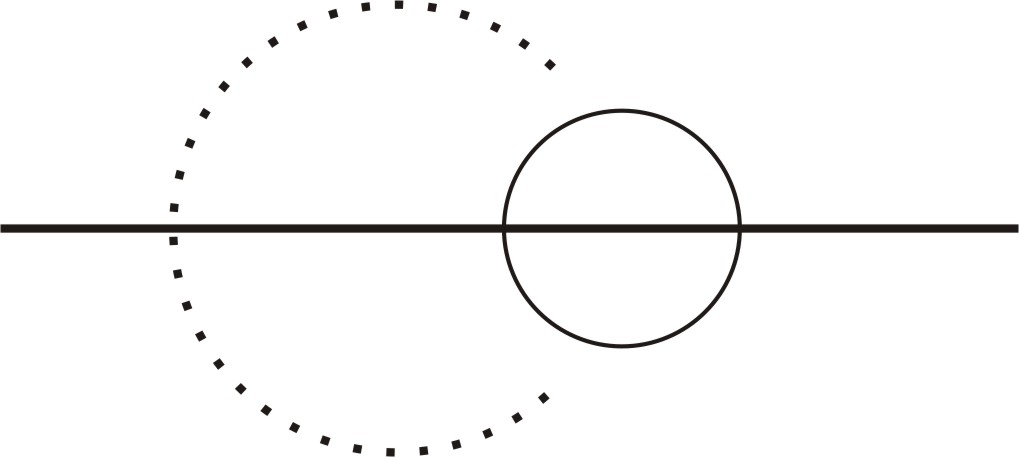}};
\node at (-2.0,1.7) {$\gamma$};
\node at (-2.0,-1.8) {$\ol\gamma$};
\node at (-3.7,1.0) {$+$};
\node at (-3.7,-1.0) {$-$};
\node at (-1.0,1.0) {$+$};
\node at (-1.0,-1.0) {$-$};
\node at (2.5,1.0) {$+$};
\node at (2.5,-1.0) {$-$};
\node at (3.8,0.2) {$k$};
\node at (0.9,0.6) {$+$};
\node at (0.9,-0.6) {$-$};
\node at (0.9,-0.2) {$0$};
\node at (-0.3,-0.2) {$-\xi$};
\node at (1.95,-0.2) {$\xi$};
\node at (0.5,1.15) {$E$};
\node at (0.5,-1.2) {$\ol{E}$};
\end{tikzpicture}

\caption{Distribution of signs of $\Im\theta(k, \xi)$ for $0<\xi<\frac{1}{2\omega}$.}\label{Fig_theta_signature}
\end{figure}

\subsection[Region $0<x<\omega_0^2 t$. A plane wave of finite amplitude]{Region $\boldsymbol{0<x<\omega_0^2 t}$. A plane wave of finite amplitude}\label{sect_g}

In this region the apropriate $g$-function surgery can be done with the help of the function
\begin{gather*}g(k)=g(k,\xi)=\Omega(k)+\frac{1}{4\xi^2}X(k)=\(\frac{\omega}{2k}+\frac{1}{4\xi^2}\)X(k),\end{gather*}
where $\Omega(k)$ is determined in \eqref{SigmaOmega} and
\begin{gather}\label{X}X(k)=\sqrt{(k-E)(k-\ol E)},\qquad \Omega(k)= \Omega(k)=\frac{\omega}{2k}\sqrt{(k-E)(k-\ol E)}.\end{gather}

\noindent To analyze how the lines $\Im g=0$ behave, it is useful to look at the differential $\d g$:\footnote{There are several misprints in \cite[p.~9]{MoK10} related to the definition of $\d g$: in the formula that follows~(21), in the second formula in~(22).}
\begin{gather*}\d g(k)=\frac{k^3-\frac{l}{2\omega} k^2+l\xi^2k-\frac{\xi^2}{2\omega}}{4\xi^2k^2X(k)}\d k=\frac{(k-\lambda_-)(k-\lambda)(k-\lambda_+)}{4\xi^2k^2X(k)}\d k,\end{gather*}
where the quantities $\lambda_-\leq\lambda\leq\lambda_+$ are subject to the following system of equations~\cite{MoK10}:
\begin{gather}\label{system_dg_semisimple} \lambda+\lambda_-+\lambda_+= \frac{l}{2\omega},\qquad \lambda(\lambda_-+\lambda_+)+\lambda_-\lambda_+=l\xi^2,\qquad \lambda\lambda_-\lambda_+= \frac{\xi^2}{2\omega}.
\end{gather}

Simple zeros of $\d g$, i.e., $\lambda_-$, $\lambda$, $\lambda_+$, correspond to the points at which there are 4 emanating rays $\Im g ={\rm const}$. Points at which $\d g\sim (k-k_0)^{-1/2}$, i.e., the points $E$, $\ol{E}$, emanate just one ray $\Im g={\rm const}$. At the point of the second order pole, i.e., at the origin, there are infinitely many lines $\Im g={\rm const}$ passing through. Lines with positive and negative value of ${\rm const}$ are separated by the real line, at which the ${\rm const}$ is~0.

\begin{figure}[t]\centering\small
\begin{tikzpicture}
\node at (0,0) {\includegraphics[width=80mm]{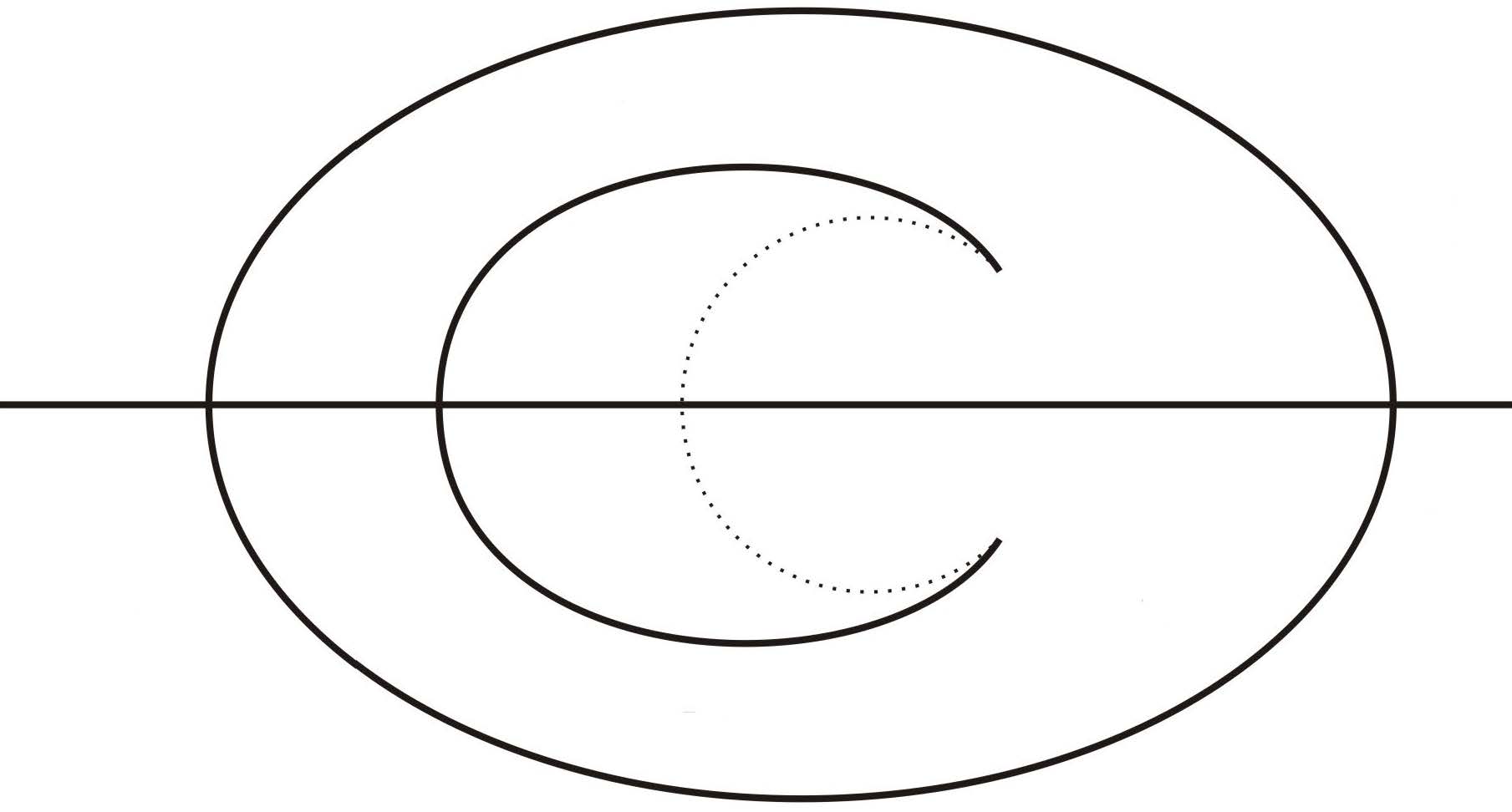}};
\node at (3.8,0.2) {$k$};
\node at (-3.1,-0.24) {$\lambda_-$};
\node at (-1.8,-0.23) {$\lambda$};
\node at (1.4,-0.23) {$0$};
\node at (3.1,-0.24) {$\lambda_+$};
\node at (0.4,0.7) {$\gamma$};
\node at (0.4,-0.8) {$\ol\gamma$};
\node at (1.45,0.7) {$E$};
\node at (1.45,-0.8) {$\ol E$};
\node at (-3.5,1.0) {$+$};
\node at (-3.5,-1.0) {$-$};
\node at (-1.2,1.4) {$-$};
\node at (-1.2,-1.4) {$+$};
\node at (2.0,1.0) {$-$};
\node at (2.0,-1.0) {$+$};
\node at (3.5,1.0) {$+$};
\node at (3.5,-1.0) {$-$};
\end{tikzpicture}

\caption{Distribution of signs of $\Im g(k, \xi)$ for $\xi>\frac{1}{2\omega_0}$. }\label{Fig_gc_signature}
\end{figure}

The system \eqref{system_dg_semisimple} was analyzed in \cite[pp.~9--10]{MoK10}, and it was shown that the real solution \begin{gather*}\lambda_-<\lambda<\lambda_+\end{gather*} exists for $\xi\in(\xi_0,+\infty)$, where $\xi_0>0$ and
\begin{gather*}
\xi_0^2=\frac{27-18l^2-l^4+(9-l^2)\sqrt{(1-l^2)(9-l^2)}}{-32l^3\omega^2},\qquad \xi_0>0,\end{gather*}
and at the boundary of the interval $\xi=\xi_0$ we have
\begin{gather}\lambda_-(\xi_0)=\lambda(\xi_0)=\frac{3+l^2+\sqrt{(1-l^2)(9-l^2)}}{8l\omega}<0,\nonumber\\
\lambda_+(\xi_0)=\frac{3-l^2+\sqrt{(1-l^2)(9-l^2)}}{-4l\omega}>0.\label{lambda_-+boundary}\end{gather}
At the other boundary of the region, when $\xi\to+\infty$, we have
\begin{gather*}\lambda_-(\xi)\to-\infty,\qquad \lambda(\xi)\to\frac{2}{2l\omega}<0,\qquad \lambda_+(\xi)\to+\infty.\end{gather*}
Within the region $\xi_0<\xi<+\infty$ we have
\begin{gather*}-\xi\sqrt{-l}<\lambda_-(\xi)<\lambda(\xi)<\frac{1}{2l\omega}<0<\xi\sqrt{-l}<\lambda_+(\xi)<\frac{\xi}{\sqrt{-l}}.\end{gather*}

The qualitative picture of the distribution of signs of $\Im g(k,\xi)$ is plotted in Fig.~\ref{Fig_gc_signature}. This function is appropriate in the asymptotic analysis of the Riemann--Hilbert Problem~\ref{RHP1} until the points $\lambda_-$ and $\lambda$ merge.

\subsection[Region $\omega_0^2t<x<\omega^2 t$. A modulated elliptic wave of finite amplitude]{Region $\boldsymbol{\omega_0^2t<x<\omega^2 t}$. A modulated elliptic wave of finite amplitude}

In this region the appropriate $g$-function surgery can be done with the help of the function
\begin{gather}\widehat g(k)=\widehat g(k,\xi)=\int_{E}^k\d\widehat g,\nonumber\\
\d\widehat g(k)=\frac{(k-\lambda_-(\xi))(k-\lambda_+(\xi))}{4\xi^2k^2}\sqrt{\frac{(k-d(\xi))(k-\ol{d}(\xi))}{(k-E)(k-\ol{E})}}\d k.\label{widehat_g}
\end{gather}

A qualitative decomposition of the complex plane according to the distribution of signs of $\Im\widehat g(k,\xi)$ is plotted in Fig.~\ref{Fig_g_signature}. Following~\cite{MoK10}, denote the part of $\Im \widehat g=0$, which connects $E$ and $d$, by~$\gamma_d$, the part that connects $d$ and $\lambda_-$ by $\gamma_{\lambda}$, and by $\ol{\gamma_d}$, $\ol{\gamma_{\lambda}}$ the corresponding parts in the lower half-plane $\Im k\leq 0$.

\begin{figure}[t]\centering\small
\begin{tikzpicture}
\node at (0,0) {\includegraphics[width=80mm]{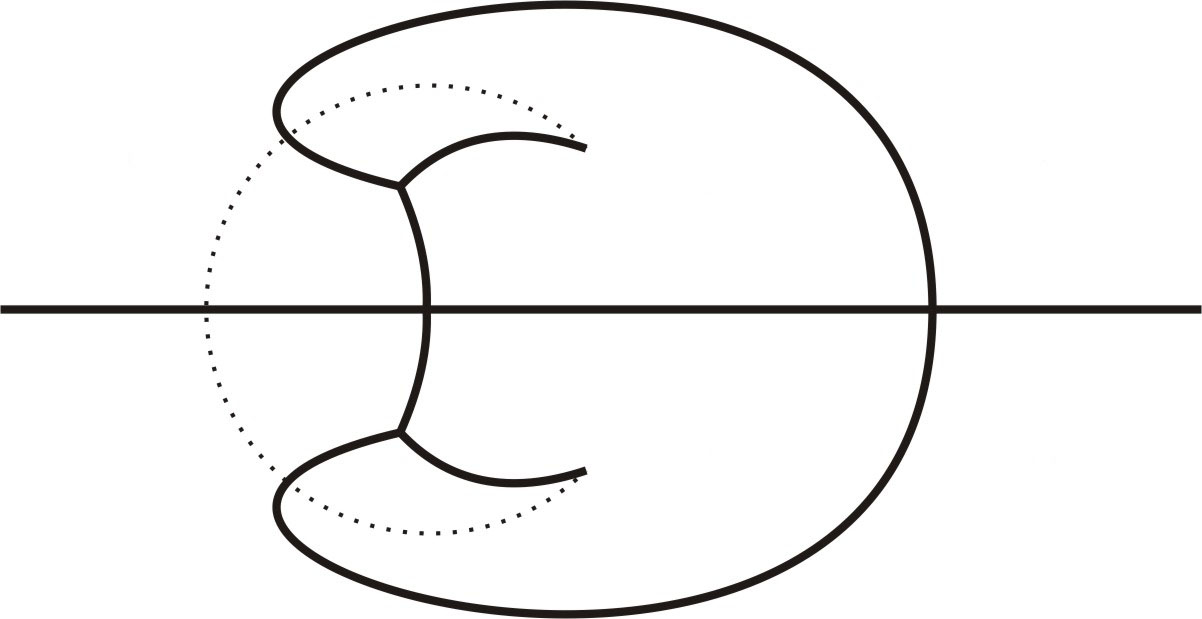}};
\node at (3.8,0.2) {$k$};
\node at (-3.5,1.0) {$+$};
\node at (-3.5,-1.0) {$-$};
\node at (3.0,1.0) {$+$};
\node at (3.0,-1.0) {$-$};
\node at (0.8,1.0) {$-$};
\node at (0.8,-1.0) {$+$};
\node at (-2.75,0.5) {$\gamma$};
\node at (-2.75,-0.5) {$\ol\gamma$};
\node at (-1.5,0.5) {$\gamma_{\lambda}$};
\node at (-1.5,-0.5) {$\ol{\gamma_{\lambda}}$};
\node at (-0.7,0.9) {$\gamma_{d}$};
\node at (-0.7,-0.9) {$\ol{\gamma_{d}}$};
\node at (-1.5,1.1) {$d$};
\node at (-1.5,-1.1) {$\ol d$};
\node at (0.1,1.1) {$E$};
\node at (0.1,-1.1) {$\ol E$};
\node at (-0.7,0.2) {$\lambda_-$};
\node at (0.3,-0.2) {$0$};
\node at (2.6,-0.3) {$\lambda_+$};
\end{tikzpicture}

\caption{Distribution of signs of $\Im \widehat g(k, \xi)$ for $\frac{1}{2\omega}<\xi<\frac{1}{2\omega_0}$. }\label{Fig_g_signature}
\end{figure}

The defining properties for $\lambda_-(\xi)$, $\lambda_+(\xi)$, $d(\xi)$ are:
\begin{enumerate}\itemsep=0pt
 \item[1)] $\d \widehat g(k)-\d\theta(k)=\mathcal{O}\big(\frac{1}{k^2}\big)$, $k\to\infty$,
 \item[2)] $\widehat g(k)-\theta(k)=\mathcal{O}(1)$, $k\to 0$,
 \item[3)] $\int_{E}^{\ol{E}}\d\widehat g=0$, where the integration path does not cross the curve $\gamma_d\cup\gamma_{\lambda}\cup\ol{\gamma_{\lambda}}\cup\ol{\gamma_{d}}$.
\end{enumerate}

The condition~(3) is well-defined, i.e., the integral does not depend on the choice of contour of integration, since by the condition~(1) the residue at $\infty$ is 0.

(Let us notice that the first two of the above properties are also satisfied by~$g(k)$.)

With these properties function $\widehat g$ is analytic in $\mathbb{C}\backslash(\gamma_d\cup\gamma_{\lambda}\cup\ol{\gamma_{d}}\cup\ol{\gamma_{\lambda}})$.

It was argued in~\cite{MoK10} that under these three conditions the distribution of signs of $\Im \widehat g$ indeed will be as shown in Fig.~\ref{Fig_g_signature}. Indeed, the local structure of the lines $\Im\widehat g={\rm const}$ can be analyzed by the same reasoning as in Section~\ref{sect_g}. The only thing we need to do is to distinguish the lines where the ${\rm const}=0$.

Since $\Re\widehat g(k=E)=0$, in order to establish that $\Re\widehat g\(k=\ol{E}\)=0$ it suffices to notice that
\begin{itemize}\itemsep=0pt
\item $\widehat g(k)-\widehat g(E)\in\mathbb{R}$ for $k\in\gamma_d$, $\widehat g(k)-\widehat g\(\ol E\)\in\mathbb{R}$ for $k\in\ol\gamma_d$,
\item $\int_{\gamma_{\lambda}\cup\ol\gamma_{\lambda}}\d\widehat g \in\mathbb{R}$, where the order of integration is from $d$ to $\lambda_-$ and from $\lambda_-$ to $\ol{d}$.
\end{itemize}

The first property is due to the local analysis of the lines $\Im \widehat g={\rm const}$.
Regarding the second property, let us mention, that the property \begin{gather*}\int_{E}^{\ol{E}}\d\widehat g=0\qquad \textrm{ implies }\qquad \Im\int_{\gamma_{\lambda}}\d\widehat g =0.\end{gather*}
Indeed, we have $\int_{\gamma_{d}}\d\widehat g_{\pm}\in\mathbb{R}$, and in view of the symmetry $\ol{\d \widehat g(k)}=\d \widehat g(\ol k)$, we have
\begin{gather*}\int_{\ol{\gamma_d}}\d\widehat g_+=-\ol{\int_{\gamma_d}\d\widehat g_+}\in\mathbb{R},\qquad \int_{\ol{\gamma_{\lambda}}}\d\widehat g=-\ol{\int_{\gamma_{\lambda}}\d\widehat g},
\end{gather*}
and then from
\begin{gather*}0=\int_{E}^{\ol{E}}\d\widehat g=\(\int_{\gamma_d}\d\widehat g_++
\int_{\ol{\gamma_d}}\d\widehat g_+\)+
\(\int_{\gamma_{\lambda}}\d\widehat g+
\int_{\ol{\gamma_{\lambda}}}\d\widehat g\)\\
\hphantom{0}= \underbrace{\(\overbrace{\int_{\gamma_d}\d\widehat g_+}^{\in\mathbb{R}}-
\overbrace{\ol{\int_{{\gamma_d}}\d\widehat g_+}}^{\in\mathbb{R}}\)}_{\in\i \mathbb{R}}+
\(\int_{\gamma_{\lambda}}\d\widehat g-\ol{
\int_{{\gamma_{\lambda}}}\d\widehat g}\)
\end{gather*}
we get the required property \begin{gather*}\Im\int_{\gamma_{\lambda}}\d\widehat g=0,\qquad \Im\int_{\ol\gamma_{\lambda}}\d\widehat g=0.\end{gather*}

The three properties (1) listed above are equivalent to the system~\eqref{lambda_-+d_system}, and we prove its solvability in the next section.

\section[Proof of existence of parameters of $g$-function]{Proof of existence of parameters of $\boldsymbol{g}$-function}\label{sect_proof}

\begin{teor}\label{Theor_main}
Let the constant $E\equiv E_1+\i E_2$, $E_1<0$, $E_2>0$ be as in \eqref{kappa}, and the parame\-ter~$\xi$ be within the interval $\big(\xi_0=\frac{1}{2\omega_0}, \frac{1}{2\omega}\big)$, where $\omega_0>0$ is as in~\eqref{omega0}. Then the following system of equations
\begin{gather}
\lambda_-+\lambda_+=E_1-d_1,\qquad \lambda_-\lambda_+=-\xi^2\frac{|E|}{|d|},\nonumber\\
2\lambda_-\lambda_+d_1+(\lambda_-+\lambda_+)|d|^2=-\xi^2\(\frac{E_1}{|E|}|d|+ \frac{d_1}{|d|}|E|\),\nonumber\\
\int_{E}^{\overline{E}}\d \widehat g=0,\qquad \text{where}\qquad \d \widehat g=\frac{(k-\lambda_-)(k-\lambda_+)}{4\xi^2k^2}\sqrt{\frac{(k-d)(k-\overline{d})}{(k-E)(k-\overline{E})}}\d k,\label{lambda_-+d_system}
\end{gather}
for unknowns
\begin{gather*}\lambda_-<\lambda_+,\qquad d\equiv d_1+\i d_2,\qquad d_2>0,\end{gather*}
determines uniquely the quantities $\lambda_{\pm}$, $d\equiv d_1+\i d_2$, $d_2>0$ as functions of $\xi$.
\end{teor}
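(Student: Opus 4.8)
My plan is to separate the system~\eqref{lambda_-+d_system} into its purely algebraic part (the first three equations, which encode the behaviour of $\d\widehat g$ at $k=\infty$ and $k=0$) and its transcendental part (the vanishing period $\int_E^{\ol E}\d\widehat g=0$). First I would read the relations $\lambda_-+\lambda_+=E_1-d_1$ and $\lambda_-\lambda_+=-\xi^2\frac{|E|}{|d|}$ as \emph{defining} $\lambda_\pm$ in terms of $d=d_1+\i d_2$: since $\lambda_-\lambda_+=-\xi^2\frac{|E|}{|d|}<0$, the quadratic $t^2-(E_1-d_1)t-\xi^2\frac{|E|}{|d|}$ has discriminant $(E_1-d_1)^2+4\xi^2\frac{|E|}{|d|}>0$, so for every $d$ with $d_2>0$ it has two real roots with $\lambda_-<0<\lambda_+$; hence these two equations are solvable with no constraint on $d$. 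The third equation, after eliminating $\lambda_-\lambda_+$ by the second, is exactly the vanishing of the $1/k$-coefficient (the $\log k$-term) of $\d\widehat g$ at the origin, i.e.\ the single real relation
\[
E_1-d_1=-\xi^2\frac{|E|}{|d|}\left(\frac{E_1}{|E|^2}-\frac{d_1}{|d|^2}\right),
\]
which I denote $F_1(d_1,d_2;\xi)=0$. Thus the three algebraic equations collapse to one real curve in the $(d_1,d_2)$-plane, and the whole problem becomes: solve the two real equations $F_1=0$ and $F_2:=\Im\int_E^{\ol E}\d\widehat g=0$ for the single complex unknown $d$ with $d_2>0$. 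Note $F_2$ is a single real condition, because the conjugation symmetry $\ol{\d\widehat g(k)}=\d\widehat g(\ol k)$ recorded before the statement (together with contour-independence, which holds since the residues of $\d\widehat g$ at $0$ and $\infty$ vanish by the first and third algebraic equations) forces $\int_E^{\ol E}\d\widehat g\in\i\mathbb{R}$, so its real part is automatically zero.

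Next I would introduce scale-invariant coordinates adapted to the geometry — for instance the modulus $x=|d|/|E|$ and an angular variable $\alpha$ comparing the argument of $d$ with that of $E$ — and solve $F_1=0$ explicitly for one of them, thereby parametrising the admissible branch-point locus by a single real parameter. Substituting this parametrisation into $F_2$ turns the period condition into a single real-valued function $\Phi(\,\cdot\,;\xi)$ of one variable whose zeros I must control. For \emph{existence} I would examine the two ends of the parameter range, which correspond to the two degenerations: $d\to E$ (the collision gluing this region to the Zakharov--Manakov region $x>\omega^2t$ of Theorem~\ref{Theor3}, reached at $\xi=\frac1{2\omega}=|E|$) and the merging $d,\ol d\to$ real axis (the collision gluing to the plane-wave region of Theorem~\ref{Theor1}, reached at $\xi=\xi_0=\frac1{2\omega_0}$, cf.~\eqref{lambda_-+boundary}). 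Matching the known boundary configurations of Figs.~\ref{Fig_theta_signature} and~\ref{Fig_gc_signature} there shows that $\Phi$ changes sign across the interval, so a zero exists by the intermediate value theorem.

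For \emph{uniqueness} I would show that $\Phi(\,\cdot\,;\xi)$ is strictly monotone in its parameter. Differentiating the period integral, rationalising the square root $\sqrt{\frac{(k-d)(k-\ol d)}{(k-E)(k-\ol E)}}$ by squaring, and clearing the denominators $k^2$ and $4\xi^2$ together with the quadratic $(k-\lambda_-)(k-\lambda_+)$ (expressed through $d$ via the algebraic relations above) reduces the sign of $\Phi'$ to the sign of a single explicit polynomial in the chosen real coordinates. This is precisely the content of Lemma~\ref{lem_problem_polynomial}: the required monotonicity is equivalent to the positivity of a polynomial of degree $12$ on the relevant domain, whose boundary is marked by the distinguished values $\alpha_0$ of~\eqref{alpha_0} and $x_0$ of~\eqref{x0}. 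Once that positivity is established, $\Phi$ is strictly monotone, its zero is unique, and unwinding the substitutions yields $\lambda_\pm$, $d$ uniquely as functions of $\xi$.

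The step I expect to be the genuine obstacle is this last one, the positivity of the degree-$12$ polynomial. Reducing the transcendental period condition to an algebraic positivity statement is a matter of careful but essentially routine bookkeeping — the real care being in choosing coordinates so that the branch of the square root and the integration contour stay consistent throughout — whereas proving that the degree-$12$ polynomial is positive on the prescribed domain is the hard, non-mechanical core, exactly the elementary-but-nontrivial problem isolated in Lemma~\ref{lem_problem_polynomial}.
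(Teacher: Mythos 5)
Your outline reproduces the paper's own strategy almost step for step: the first two equations of \eqref{lambda_-+d_system} define $\lambda_\pm$ as the real roots of a quadratic with negative product (so $\lambda_-<0<\lambda_+$ with no constraint on $d$); the third equation is solved for the angular variable of $d$ in terms of its modulus (this is exactly the paper's formula \eqref{cosfi} for $\cos\varphi$ in terms of $r=|d|$, in polar coordinates $d=r(\cos\varphi+\i\sin\varphi)$); the period condition then becomes one real equation $F(r,\xi)=0$, the integral being purely imaginary by Schwarz symmetry, just as you observe; and uniqueness comes from monotonicity, $F_r\neq0$, whose sign is governed by the degree-12 polynomial $P(x,\alpha)$ of Lemma~\ref{lem_problem_polynomial}. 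One correction to your dictionary, though: in the paper's rescaling \eqref{xalphabeta} the variable $\alpha=(2\omega\xi)^2$ is the rescaled squared slow variable and $x=2\omega r$ the rescaled modulus of $d$ --- $\alpha$ is \emph{not} an angular coordinate. The angle is eliminated via \eqref{cosfi} before the polynomial ever appears, and the positivity must hold on $x>1$, $1<\alpha<\alpha_0$, i.e., uniformly in $\xi$ over the whole elliptic region, not pointwise along a one-parameter family at fixed $\xi$ only.

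The genuine gap is the admissibility of your parametrisation. Solving your $F_1=0$ for the angle yields $\cos\varphi=-\sqrt{\beta}\big(x^2+\alpha x\big)/\big(x^3+\alpha\big)$, and this defines an angle only if the right-hand side has modulus at most $1$. This is a separate, necessary verification in the paper's proof: the bound reduces to $\alpha\leq x^2\big(x-\sqrt{\beta}\big)/\big(\sqrt{\beta}x-1\big)$ for $x>1/\sqrt{\beta}$, and minimising the right-hand side gives exactly $\alpha_{\max}=\alpha_0$ attained at $x_{\min}=x_0$. Thus $(x_0,\alpha_0)$ enters the theorem through \emph{two logically independent} computations --- once as the extremal point of the set where $P\leq0$ (Lemma~\ref{lem_problem_polynomial}, equations \eqref{alpha_0}, \eqref{x0}), and once from the cosine constraint --- and this coincidence is precisely the ``given for free'' phenomenon flagged in the introduction. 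Your outline conflates the two by attributing $\alpha_0$, $x_0$ solely to the boundary of the polynomial's positivity domain, and thereby silently skips the cosine check; without it, the reduction of the period condition to a function of the modulus alone is not justified on the full range. A smaller remark: your existence argument (sign change of $\Phi$ between the degenerations $d\to E$ and $d\to\ol d$, then the intermediate value theorem) would still require computing that sign change, and note that at fixed admissible $\xi$ the strict inequality $|\cos\varphi|<1$ holds for \emph{all} $x>1$, so the degeneration $d=\ol d$ occurs only as $\xi\to\xi_0$, not at an endpoint of the $x$-range; but the paper is equally terse on existence, passing directly from $F_r\neq0$ and the known boundary configurations $d=E$ at $\xi=\frac{1}{2\omega}$ and $d=\ol d=\lambda_-$ at $\xi=\xi_0$ to unique solvability, so on that step you and the paper are on the same footing.
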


\begin{rem}In the limiting case when $\xi=\frac{1}{2\omega_0}$ (the border with the region of the plane wave asymptotics), we have $\lambda_-=d=\ol{d}$, and $\lambda_{\pm}$ are defined in \eqref{lambda_-+boundary}.

In the other limiting case when $\xi=\frac{1}{2\omega}$ (the border with the Zakharov--Manakov region) we have $d=E$, $\lambda_+=|E|$, $\lambda_-=-|E|$.
\end{rem}

\begin{proof}We have
\begin{gather*}E=\frac{l+\i p}{2\omega},\qquad l<0,\qquad l^2+p^2=1,\qquad \omega>0.\end{gather*}
Then the first two equations give us that $\lambda_\pm$ are the roots of the equation
\begin{gather*}\lambda_\pm^2-(E_1-r\cos{\varphi})\lambda_\pm-\frac{\xi^2}{2r\omega}=0,\end{gather*}
and since $\frac{-\xi^2}{2r\omega}<0$, this quadratic equation always have two distinct real solutions $\lambda_-<0<\lambda_+$.

Now the problem is reduced to that of finding $d_1$, $|d|$. Denote
\begin{gather*}d=:r(\cos{\varphi}+\i\sin{\varphi}).\end{gather*}
The third condition takes the form
\begin{gather}
-\frac{\xi^2\cos{\varphi}}{\omega}+\(\frac{l}{2\omega}-r\cos{\varphi}\)r^2= -\xi^2\(lr+\frac{\cos{\varphi}}{2\omega}\)
\quad \Leftrightarrow\nonumber\\
\cos{\varphi}\(-\frac{\xi^2}{\omega}-r^3+\frac{\xi^2}{2\omega}\)=\frac{-r^2l}{2\omega}-\xi^2lr \quad \Leftrightarrow \quad \cos{\varphi}\(-\frac{\xi^2}{2\omega}-r^3\)=\frac{-r^2l}
{2\omega}-\xi^2lr\quad \Leftrightarrow\nonumber\\
\cos{\varphi}=\frac{\frac{r^2l} {2\omega}+\xi^2lr}{\frac{\xi^2}{2\omega}+r^3}.\label{cosfi}
\end{gather}

Hence, now the problem is reduced to that of finding $r$. However, afterwards we will need to check that the module of the r.h.s.\ in~\eqref{cosfi} is less or equal than~1.

We have the connection between $r$ and $\xi$:
\begin{gather*} F(r,\xi)\equiv\int_E^{\overline{E}}\frac{(k-\lambda_-)(k-\lambda_+)}{4\xi^2k^2}\sqrt{\frac{(k-d)(k-\overline{d})} {(k-E)(k-\overline{E})}}\d k=0.\end{gather*}
The l.h.s.\ is indeed a function only of $\xi$, $r$, as was shown previously:
\begin{gather*}F(r,\xi)=\int_E^{\overline{E}}\frac{k^2-\(\frac{l}{2\omega}-r\cos{\varphi}\)k-
\frac{\xi^2}{2r\omega}}{4\xi^2k^2}\sqrt{\frac{(k-r\cos\varphi)^2+r^2(1-\cos^2\varphi)}
{(k-E)(k-\overline{E})}}\d k=0,\end{gather*}
where $\cos\varphi=(\cos\varphi)(r)$ is defined in \eqref{cosfi}.

Let us differentiate $F(r,\xi)$ with respect to $r$:
\begin{gather*}F_{r}(r,\xi)=\frac{1}{4|E|^2r^2\xi^2\big(r^3+|E|\xi^2\big)^3} \\
\hphantom{F_{r}(r,\xi)=}{} \times \int_{E}^{\overline{E}}\frac{1}{\sqrt{(k-E)(k-\overline{E})\((k-r\cos\varphi)^2+r^2(1-\cos^2\varphi)\)}}\d k\\
\hphantom{F_{r}(r,\xi)=}{} \times\Big(|E|^6\xi^8+\big(4|E|^5r^3+2E_1^2|E|^3r^3-6E_1^2|E|r^5\big)\xi^6+\big(3E_1^2|E|^4r^4-14E_1^2|E|^2r^6 \\
\hphantom{F_{r}(r,\xi)=}{} +6|E|^4r^6+3E_1^2r^8\big)\xi^4+ \big({-}6E_1^2|E|^3r^7+4|E|^3r^9+2E_1^2|E|r^9\big)\xi^2+|E|^2r^{12}\Big).
\end{gather*}

We see that vanishing or nonvanishing of $F_{r}$ depends only on the third multiplier. Let us divide there by $|E|^{14}$ and make the change of variables
\begin{gather}\label{xalphabeta}\alpha=\frac{\xi^2}{|E|^2}=(2\omega)^2\xi^2,\qquad x=\frac{r}{|E|}=2\omega r, \qquad \beta=\frac{E_1^2}{|E|^2}=l^2\in(0,1),\end{gather}
then it becomes
\begin{gather*}
\alpha^4+\big(4x^3+2\beta x^3-6\beta x^5\big)\alpha^3+\big(3\beta x^4+6x^6-14\beta x^6+3\beta x^8\big)\alpha^2\\
\qquad{} +\big({-}6\beta x^7+4x^9+2\beta x^9\big)\alpha+x^{12}.
\end{gather*}

In Lemma \ref{lem_problem_polynomial} we prove that the above expression is nonzero for all $x>1$, $1<\alpha<\alpha_0$. Hence, equation $F(r,\xi)$ uniquely determines $r$ as a function of $\xi$ for all $\frac{1}{2\omega}<\xi<\xi_0=\frac{1}{2\omega_0}$.

Now it remains only to check that the module of the r.h.s.\ in \eqref{cosfi} is less or equal than 1. In terms of parameters $x$, $\alpha$, $\beta$ \eqref{xalphabeta} formula \eqref{cosfi} becomes
\begin{gather*}\cos\varphi=\frac{-\sqrt{\beta}\big(x^2+x\alpha\big)}{x^3+\alpha}.\end{gather*}
Hence, we need to check that for all $x>1$, $\alpha>1$ we have
\begin{gather*}\sqrt{\beta}\big(x^2+\alpha x\big)\leq x^3+\alpha,\qquad \text{i.e.}, \qquad \alpha\big(\sqrt{\beta} x-1\big)\leq x^2(x-\beta).\end{gather*}
The above inequality is obviously satisfied for $1\leq x\leq\frac{1}{\sqrt{\beta}}$. For $x>\frac{1}{\sqrt{\beta}}$ it becomes
\begin{gather*}\alpha\leq\frac{x^2\big(x-\sqrt{\beta}\big)}{x\sqrt{\beta}-1}.\end{gather*}
Let us find the minimum of the function in the r.h.s.\ of the above formula. The point at which the minimum is attained is among zeros of \begin{gather*}x^2-\frac{\beta+3}{2\sqrt{\beta}}x+1=0,\end{gather*}
i.e., \begin{gather*}x_{\min}=\frac{3+\beta+\sqrt{(1-\beta)(9-\beta)}}{4\sqrt{\beta}},\end{gather*}
and hence \begin{gather*}\alpha_{\max}:= \min\limits_{\frac{1}{\sqrt{\beta}}<x} \! \frac{x^2\big(x-\sqrt{\beta}\big)}{\sqrt{\beta}x-1}=\frac{x_{\min}^3-\sqrt{\beta}x_{\min}}{\sqrt{\beta}x-1}=
\frac{27-18\beta-\beta^2+(9-\beta)\sqrt{(1-\beta)(9-\beta)}}{8\beta\sqrt{\beta}}.\end{gather*}
Let us notice that $x_{\min}$ and $\alpha_{\max}$ coincide with $x_0$ \eqref{x0}, $\alpha_0$ \eqref{alpha_0} from Lemma~\ref{lem_problem_polynomial}. This finishes the proof of the theorem.
\end{proof}

\begin{lem}\label{lem_problem_polynomial} Let $\beta\in(0,1)$. Consider the set $($see Fig.~{\rm \ref{Fig_set_M}a)}
\begin{gather}\label{set_M}
 M= \{(x,\alpha)\colon x>1 , \, \alpha>1 , \, P(x,\alpha)\leq0 \},
\end{gather}
where
\begin{gather*}
P(x,\alpha)\equiv\alpha^4+\big(4x^3+2\beta x^3-6\beta x^5\big)\alpha^3+\big(3\beta x^4+6x^6-14\beta x^6+3\beta x^8\big)\alpha^2 \nonumber\\
\hphantom{P(x,\alpha)\equiv}{} +\big({-}6\beta x^7+4x^9+2\beta x^9\big)\alpha+x^{12}.
\end{gather*}
Prove that
\begin{gather}\label{alpha_0}
\min\limits_{(x,\alpha)\in M}\alpha =\alpha_0,\qquad \alpha_0=\frac{27-18\beta-\beta^2+(9-\beta)\sqrt{(1-\beta)(9-\beta)}}{8\beta\sqrt{\beta}},
\end{gather}
and the minimum is attained at a single point $(x_0,\alpha_0)\in M$,
where \begin{gather}\label{x0}
 x_0=\frac{3+\beta+\sqrt{(1-\beta)(9-\beta)}}{4\sqrt{\beta}}.
 \end{gather}
\end{lem}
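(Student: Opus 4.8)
Write-up proposal.

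The plan is to linearize $P$ in $\beta$ and then trivialize the algebra by the change of variable suggested by the geometric meaning $c=\cos\varphi$ of the construction (cf.\ the formula for $\cos\varphi$ after \eqref{cosfi}). Put $u=\sqrt\beta\in(0,1)$ and set
\[
\alpha=\frac{-x^2(cx+u)}{ux+c},\qquad\text{so that}\qquad \alpha+x^3=\frac{ux^2(x^2-1)}{ux+c}.
\]
For each fixed $x>1$ this is a decreasing bijection of $\alpha\in(0,\infty)$ onto $c\in(-ux,-u/x)$, with $ux+c>0$ throughout. Factoring the $\beta$-coefficient of $P$ (its discriminant is the perfect square $9(x^2-1)^4$) and substituting, a direct computation gives the clean identity
\[
P(x,\alpha)=\frac{u^2x^8(x^2-1)^2}{(ux+c)^4}\,W(c,x),\qquad W(c,x)=u^2(x^2-1)^2+(ux+c)(cx+u)(3cx+u)(ux+3c).
\]
Since the prefactor is positive for $x>1$, the sign of $P$ equals the sign of the quartic $W(\cdot,x)$, and the whole Lemma becomes a statement about $W$.

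Next I would reduce the minimization to two claims about $W$. Under the bijection $c=-1$ corresponds to $\alpha=\alpha_{\max}(x):=\frac{x^2(x-u)}{ux-1}$ (for $x>1/u$), and $\alpha\lessgtr\alpha_{\max}(x)\iff c\gtrless-1$; moreover a direct calculation shows $\min_{x>1/u}\alpha_{\max}(x)=\alpha_0$, attained only at $x=x_0$ (this is exactly the computation of $\alpha_{\max}$, $x_{\min}$ in the proof of Theorem~\ref{Theor_main}). The two claims are: (I) $W(c,x)>0$ for all $x>1$ and $-1<c<-u/x$; and (II) $W(-1,x)\ge0$ for all $x>1$, with equality iff $x=x_0$. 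Granting these: for $1<x\le1/u$ the entire physical range $(-ux,-u/x)$ lies inside $(-1,-u/x)$, so by (I) $P>0$ for every $\alpha>0$ and $M$ contains no point with $x\le1/u$; for $x>1/u$, (I) forces any point of $M$ (where $W\le0$) to have $c\le-1$, i.e.\ $\alpha\ge\alpha_{\max}(x)\ge\alpha_0$. Hence $\alpha\ge\alpha_0$ on all of $M$, and $\alpha=\alpha_0$ requires both $\alpha_{\max}(x)=\alpha_0$ (so $x=x_0$) and $c=-1$; there (II) gives $W=0$, hence $P(x_0,\alpha_0)=0$, so $(x_0,\alpha_0)\in M$. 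Since $x_0>1/u>1$ and $\alpha_0>1$, this yields $\min_{(x,\alpha)\in M}\alpha=\alpha_0$, attained only at $(x_0,\alpha_0)$.

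Claim (II) is where the explicit answer falls out, and it is the satisfying step: specializing the four‑factor form of $W$ at $c=-1$ and adding $u^2(x^2-1)^2$ yields a palindromic quartic in $x$, which after $y=x+\tfrac1x$ becomes $x^2\big(2uy-(u^2+3)\big)^2$, i.e.
\[
W(-1,x)=\big(2ux^2-(u^2+3)x+2u\big)^2.
\]
This is a perfect square, hence $\ge0$, and it vanishes exactly on the roots of $2\sqrt\beta\,x^2-(\beta+3)x+2\sqrt\beta=0$, which is precisely the quadratic defining $x_0$ in \eqref{x0}; as its roots have product $1$, the unique root with $x>1$ is $x_0$. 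Thus $\alpha_0$ and $x_0$ appear for free.

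The main obstacle is Claim (I), the genuine two-variable polynomial positivity. A first reduction splits the interval at $c=-ux/3$, the sign change of the last factor $ux+3c$: when $c\ge-ux/3$ (the whole interval as soon as $x\ge3/u$) the product term is $\ge0$ and $W>0$ immediately, so the only delicate region is where $ux+3c<0$ and one must show that $u^2(x^2-1)^2$ dominates $-(ux+c)(cx+u)(3cx+u)(ux+3c)$. I would handle this uniformly by proving the sharper statement that $c\mapsto W(c,x)$ attains its minimum over $[-1,-u/x]$ at the endpoint $c=-1$, equivalently that the cubic quotient $S(c,x)=\big(W(c,x)-W(-1,x)\big)/(c+1)$ is nonnegative on that interval; combined with (II) this gives $W(c,x)\ge W(-1,x)\ge0$ with the stated equality cases (the strict inequality for $c>-1$ following since $S>0$ there). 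Establishing nonnegativity of $S$ on this compact region is elementary but the most technical point, and is where I expect to spend the bulk of the effort.
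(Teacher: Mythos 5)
Your algebraic skeleton is correct, and it is a genuinely different route from the paper's (which sets $z=\alpha/x^3$, $w=z+1/z$ as in \eqref{z,w}, reads $P\le0$ as a biquadratic inequality in $x^2$, characterizes $M$ by $z>\frac12$, $x^2\ge x_+^2(z)$ with $x_+^2$ from \eqref{roots_x12}, and then solves $P=P_x=0$ along the smooth boundary). I checked your two key identities: with $u=\sqrt\beta$ and $\alpha=-x^2(cx+u)/(ux+c)$ one indeed has $P(x,\alpha)(ux+c)^4=u^2x^8\big(x^2-1\big)^2W(c,x)$ --- this follows from the linearity of $P$ in $\beta$, namely $P=(\alpha+x^3)^4+\beta\, x^3\alpha\big((1-3x^2)\alpha-2x^3\big)\big(2\alpha-x(x^2-3)\big)$; note the discriminant you factor is $9x^2(x^2-1)^4$, not $9(x^2-1)^4$, a harmless slip --- and $W(-1,x)=\big(2ux^2-(u^2+3)x+2u\big)^2$ expands correctly, its unique root greater than $1$ being $x_0$ of \eqref{x0}, since the two roots form a reciprocal pair. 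The deduction of the Lemma from Claims (I) and (II), with uniqueness coming from the unique minimizer $x_0$ of $\alpha_{\max}(x)=x^2(x-u)/(ux-1)$ over $x>1/u$, is also valid, and the interpretation $c=\cos\varphi$ from \eqref{cosfi}, with the extremum sitting exactly at $c=-1$ (the degeneration $d=\ol d=\lambda_-$), is a genuine conceptual bonus over the paper's elimination computation: $x_0$, $\alpha_0$ of \eqref{alpha_0} indeed drop out of a perfect square.

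However, there is a genuine gap: Claim (I) carries the entire analytic content of the Lemma and is not proved. Your fallback reduction of (I) to $S(c,x)=\big(W(c,x)-W(-1,x)\big)/(c+1)\ge0$ is again a nontrivial two-variable polynomial positivity statement (a cubic in $c$ with coefficients polynomial in $x$, $u$) for which you give no argument, and the monotonicity picture behind it (minimum of $W$ over $[-1,-u/x]$ at the endpoint $c=-1$) is not self-evident: $W$ is a quartic in $c$ with positive leading coefficient $9x^2$, so an interior local minimum is not excluded a priori. The positivity is moreover genuinely tight, so crude factor-wise bounds will not close it: near $x=1$ the physical interval $(-ux,-u/x)$ collapses to $c=-u$, both $ux+c>0$ and $-(cx+u)>0$ are small with sum $(u-c)(x-1)$, and the arithmetic--geometric mean inequality bounds the negative term of $W$ roughly by $u^2(u-c)^2(x-1)^2$ against $u^2(x^2-1)^2\approx4u^2(x-1)^2$, so positivity survives there only because $u-c<u+1<2$, with near-equality as $\beta\to1$, $c\to-1$ --- consistent with $x_0\to1$ in \eqref{x0}. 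So what you have is a correct and attractive reduction, but until $S\ge0$ (or (I) directly, perhaps exploiting that only $\alpha>1$ matters for $M$ in \eqref{set_M}) is actually established, it is not yet a proof; this deferred step is precisely the counterpart of the paper's three-case sign analysis of $x_\pm^2$ and of the elimination from \eqref{eq_P_x}--\eqref{eq_P} selecting $z_{1,+}$.
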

\begin{proof} Let us transform the inequality in \eqref{set_M} $P(x,\alpha)\leq 0$ in the following way: divide both sides by $x^{12}$, and denote
\begin{gather}\label{z,w}
z=\frac{\alpha}{x^3},\qquad w=z+\frac{1}{z},
\end{gather}
then the inequality $P(x,\alpha)\leq0$ can be rewritten in the following equivalent form
\begin{gather}\label{prom_ineq_1}z^4+\big(4+2\beta -6\beta x^2\big)z^3+\(\frac{3\beta}{x^2}+6-14\beta+3\beta x^2\)z^2+\(\frac{-6\beta}{x^2}+4+2\beta\)z+1\leq0.\!\!\!\end{gather}
In the latter inequality \eqref{prom_ineq_1} we multiply both sides by $x^2$, divide by $z^2$ and then rearrange the terms to obtain a bi-quadratic expression in~$x^2$:
\begin{gather}\label{prom_ineq_2}
3\beta(1-2z)x^4+\big(w^2+(4+2\beta)w+(4-14\beta)\big)x^2+3\beta\(1-\frac{2}{z}\) \leq 0.
\end{gather}

\begin{figure}[t]\centering\small
\begin{tikzpicture}
\node at (0,0) {\includegraphics[width=72mm]{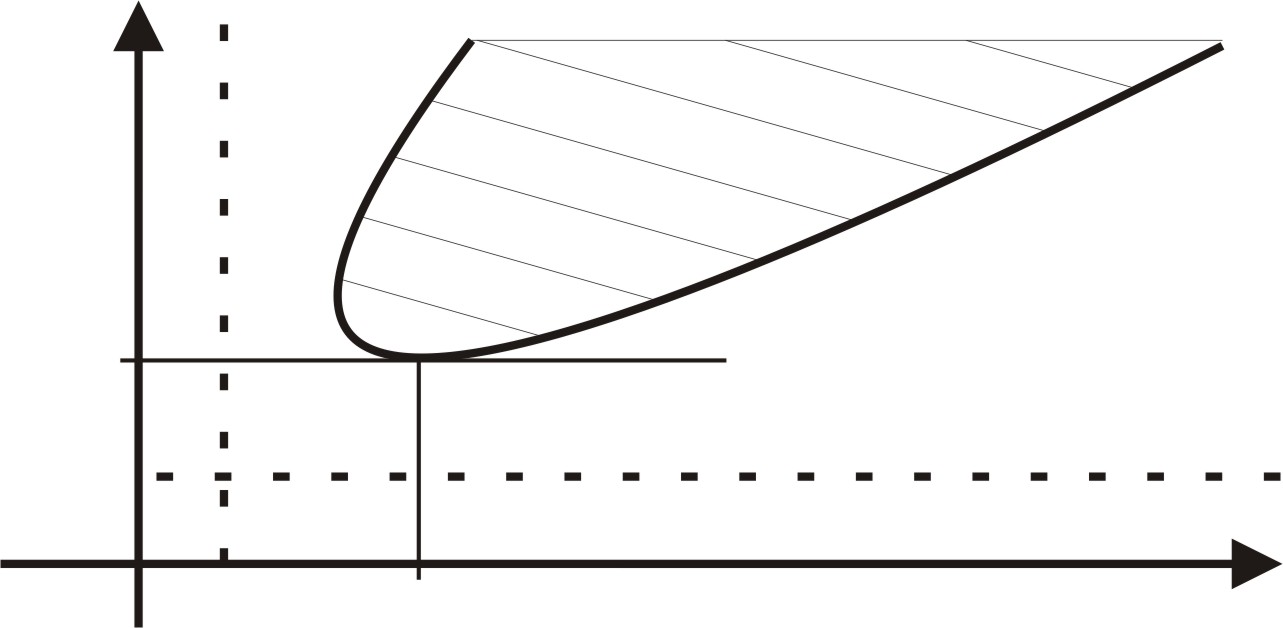}\qquad \includegraphics[width=72mm]{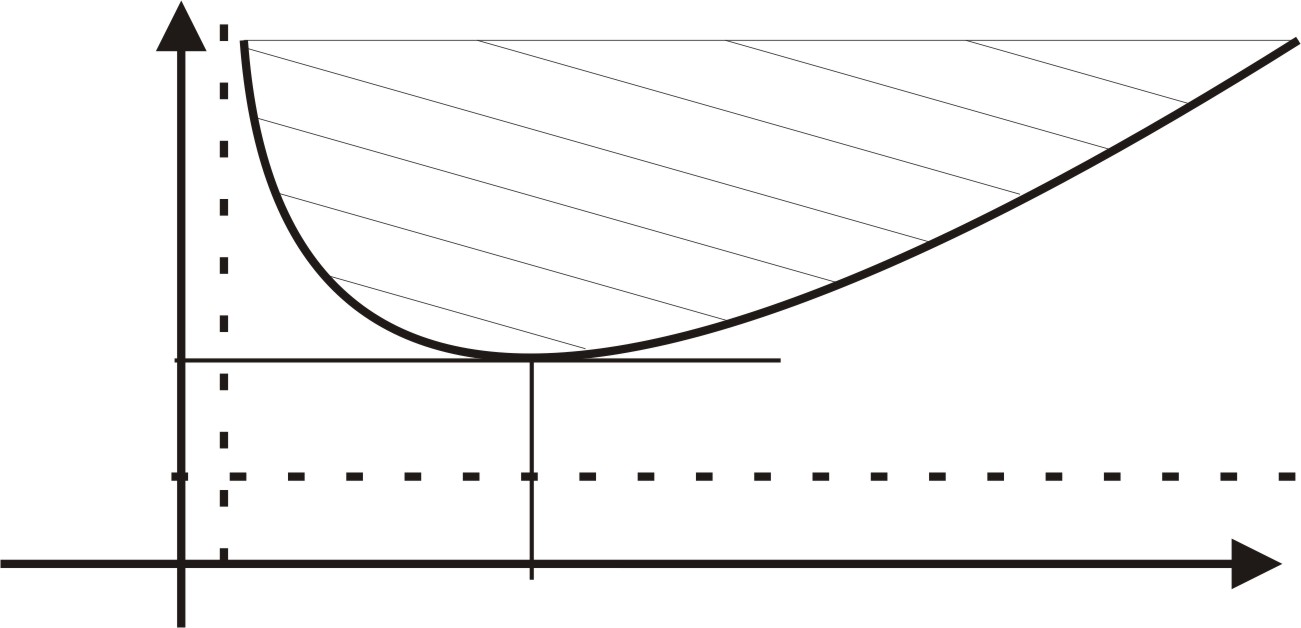}};
\node at (-3.8,-2.2) {(a) Set $M$ \eqref{set_M}};
\node at (4.0,-2.2) {(b) Set $M_1$ \eqref{set_M1}};
\node at (-6.3,-1.7) {$1$};
\node at (-5.15,-1.75) {$x_0$};
\node at (-0.5,-1.75) {$x$};
\node at (1.6,-1.7) {$\frac12$};
\node at (3.4,-1.75) {$z_{_{1,+}}$};
\node at (7.3,-1.75) {$z$};
\node at (-7.0,-0.9) {$1$};
\node at (-7.1,-0.3) {$\alpha_0$};
\node at (-7.1,1.6) {$\alpha$};
\node at (1.15,-0.9) {$1$};
\node at (1.05,-0.3) {$\alpha_0$};
\node at (1.05,1.6) {$\alpha$};
\end{tikzpicture}
\caption{Sets $M$ and $M_1$.}\label{Fig_set_M}
\end{figure}

Now, for $(x,\alpha)\in M$ the quantity $z=\frac{\alpha}{x^3}$ can not be equal to $\frac12$. Indeed, for $z=\frac12$ \eqref{prom_ineq_2} becomes \begin{gather*}x^2\leq\frac{\beta}{\frac94-\beta}<\frac45<1,\end{gather*}
which contradicts the condition of $(x,\alpha)\in M$.

The roots of the above bi-quadratic expression in \eqref{prom_ineq_2} are given by the formulas
\begin{gather}\label{roots_x12}
 x_{\pm}^2=\frac{p\pm\sqrt{p^2+36\beta^2(2w-5)}}{6\beta(2z-1)},\qquad p:=(w+2)^2+2\beta(w+2)-18\beta,
\end{gather}
where the index $_{+}$ corresponds to the value of the root with ``$+$'', and the index $_{-}$ corresponds to ``$-$''.

Let us notice, that since $w\geq2$, then
\begin{gather*}p=(w+2)^2+2\beta(w-7)\geq16-10\beta>0,\\
\big((w+2)^2+2\beta(w-7)\big)^2+36\beta^2(2w-5)\geq(16-10\beta)^2-36\beta^2=64(1-\beta)(4-\beta)>0,\end{gather*}
and hence both roots $x_+^2$, $x_-^2$ in \eqref{roots_x12} are always real. By \eqref{z,w}, $z$ can take values in $(0,+\infty)$ when $(x,\alpha)\in M$. Now we need to consider three cases:
\begin{enumerate}\itemsep=0pt
 \item If $0<z<\frac12$. In this case $w>\frac52$, and hence $x_+^2<0$, and $x_-^2>0$. Let us check that $x_-^2<1$. Indeed,
\begin{gather}\label{2w-5}
 2w-5=\frac{(2z-1)(z-2)}{z},
\end{gather}
and hence
\begin{gather*}\sqrt{p^2+36\beta^2(2w-5)}<p+6\beta(1-2z)\end{gather*}
is equivalent in our case to the obvious inequality \begin{gather*}\beta(1-2z)(w+2)(w+2-4\beta)>0.\end{gather*}
Hence, the set described by inequality \eqref{prom_ineq_2}, which reads in this case as
\begin{gather*}x_+^2\leq x^2 \leq x_-^2<1,\end{gather*}
does not intersect with the set $M$ \eqref{set_M}.
\item If $z\geq2$. In this situation $w\geq\frac52$, and hence $x_-^2\leq0$, $x_+^2>0$, hence for $x>1$ \eqref{prom_ineq_2}
reads as \begin{gather*}x^2\geq x_+^2.\end{gather*}
\item If $\frac12<z<2$. In this situation $2\leq w<\frac52$, hence $x_+^2>x_-^2>0$. Let us check that $x_-^2<1$. Indeed,
\begin{gather*}\sqrt{p^2-36\beta^2(5-2w)}>p-6\beta(2z-1)\end{gather*} provided that \begin{gather*}\beta(2z-1)(w+2)(w+2-4\beta)>0,\end{gather*}
and the latter is obviously true. Hence, like in the previous case, \eqref{prom_ineq_2} reads as \begin{gather*}x^2\geq x_+^2.\end{gather*}
\end{enumerate}
Also, for $z>\frac12$ we have $x_+^2>1$. Indeed,
\begin{gather*}\sqrt{p^2+36\beta^2(2w-5)}>-p+6\beta(2z-1)\end{gather*} provided that \begin{gather*}\beta(2z-1)(w+2)(w+2-4\beta)>0,\end{gather*} which is an obvious inequality.

Summing up, under conditions $x>1$, $\alpha>1$, inequality \eqref{prom_ineq_2}, which describes the set~$M$, is equivalent to the inequality \begin{gather*}x^2\geq x_+^2=x_+^2(z),\end{gather*}
with $x_+^2$ determined in \eqref{roots_x12},~\eqref{z,w}.

Hence, in variables $(z,\alpha)$ the set $M$ can be described as follows (see Fig.~\ref{Fig_set_M}b):
\begin{gather}\label{set_M1}M_1=\left\{(z,\alpha)\colon z>\frac12, \, \alpha>1, \, \alpha\geq z\big(x_+^2(z)\big)^{3/2}\right\},\end{gather}
and the map \begin{gather*}(z,\alpha)\mapsto \(x=\(\frac{\alpha}{z}\)^{\frac13},\alpha\)\end{gather*}
is a bijection between sets $M_1$ and $M$. As a byproduct, we see that the set $M$ is not empty, which is not quite clear from the representation~\eqref{set_M}.

Further, the condition $\alpha>1$ in \eqref{set_M1} is reduntant, and the set $M_1$ can be written as
\begin{gather}\label{set_M2}M_1=\left\{(z,\alpha)\colon z>\frac12, \, \alpha\geq z\big(x_+^2(z)\big)^{3/2}\right\}.\end{gather}

Indeed, since $z(x_+(z))^{3/2}>1$ for $z\to\frac12+0$ and for $z\to+\infty$, otherwise we would have a~point $(z,\alpha=1)\in M_1$, and hence a point $(x,\alpha=1)\in M$. Then, for some $x>1$ \begin{gather*}P(x,\alpha=1)\leq0,\end{gather*}
and dividing the latter inequality by $x^{6}$ and making the change of variable $u=x+\frac{1}{x}$, we would obtain
\begin{gather*}u^6-6u^4+2(\beta+2)u^3+3(\beta+3)u^2-12(\beta+1)u+4-20\beta\leq0,\end{gather*}
or, equivalently,
\begin{gather*}(u+2)^2\big((u-1)^4+\beta(2u-5)\big)\leq0,\end{gather*}which can not hold since $u\geq2$ and $\beta\in(0,1)$.

Now, to find $\min\limits_{(x,\alpha)\in M}\alpha$ it is enough to find $\min\limits_{z>\frac12}z\big(x_+^2(z)\big)^{3/2}$. However, the latter problem is itself quite involved. Instead, we notice that the representation~\eqref{set_M2} of the set $M_1$ implies that its boundary
\begin{gather*}\partial M_1=\left\{(z,\alpha)\colon z>\frac12,\, \alpha=z\big(x_+^2(z)\big)^{3/2}\right\}\end{gather*} is ($C^{\infty}$) smooth, and hence the same is true for the boundary of the set~$M$. Hence, all the points $(x,\alpha)$ at which the minimum $\min\limits_{(x,\alpha)\in M}\alpha$ is attained are among the solution set of the system
\begin{gather*}
 P(x,\alpha)=0,\qquad P_x(x,\alpha)=0.
\end{gather*}
Equation $P_x(x,\alpha)=0$ for $x>1$ is equivalent to (we divide it by $x^{11}$ and keep the nota\-tions~\eqref{z,w})
\begin{gather*}\big(2+\beta-5\beta x^2\big)z^3+\(\frac{6\beta}{x^2}+6-14\beta+4\beta x^2\)z^2+\(\frac{-7\beta}{x^2}+6+3\beta\)\alpha+2=0,\end{gather*}
or, rearranging it in order to get a biquadratic trinomial in~$x^2$,
\begin{gather}\label{eq_P_x}x^2\beta(-5 z+4)+\((2+\beta)z+6-14\beta+\frac{3(2+\beta)}{z}+\frac{2}{z^2}\)+\frac{\beta\big(2-\frac{7}{z}\big)}{x^2}=0.\end{gather}
On the other hand, equation $P(x,\alpha)=0$ for $x>1$ is equivalent, as have been shown previously, to
\begin{gather}\label{eq_P}x^2=x_+^2=\frac{p+\sqrt{p^2+36\beta^2(2w-5)}}{6\beta(2z-1)}
\end{gather}
with $z>\frac12$ and $z$, $w$ defined in \eqref{z,w} and $p$ defined in \eqref{roots_x12}. It is straightforward to check that $z=2$ is not a solution to~\eqref{eq_P} for $\beta\in(0,1)$. Hence, using \eqref{2w-5}, we can rewrite \eqref{eq_P} as
\begin{gather}\label{x^-2}\frac{1}{x^2}=\frac{z\big(p-\sqrt{p^2+36\beta^2(2w-5)}\big)}{6\beta(2-z)}.\end{gather}
Now we substitute the expressions for $x^2$, $x^{-2}$ from \eqref{eq_P}, \eqref{x^-2} into \eqref{eq_P_x}. Rearranging and collecting terms containing $\sqrt{p^2+36\beta^2(2w-5)}$, we obtain
\begin{gather*}\frac{\sqrt{p^2+36\beta^2(2w-5)} (z+1)^2+3p\big(3z^2-10z+5\big)}{6(2z-1)(z-2)}=(2+\beta)\(z+\frac3z\)+6-14\beta+\frac{2}{z^2}.\end{gather*}
Multiplying by the denominator and transferring terms not containing $\sqrt{p^2+36\beta^2(2w-5)}$ to one side, substituting the expression for~$p$ \eqref{roots_x12}, and extracting then terms $(z+1)^2$, we
obtain
\begin{gather}\label{prom_eq_3}\sqrt{p^2+36\beta^2(2w-5)}=\frac{3\big(1-z^2\big) (3w-2(6-\beta))}{z}.\end{gather}
Further, \begin{gather}\label{prom_eq_4}p^2+36\beta^2(2w-5)=(w+2)^2\big(w^2+4w(1+\beta)+4\big(1-7\beta+\beta^2\big)\big),\end{gather}
and substituting \eqref{prom_eq_4} into \eqref{prom_eq_3}, we obtain
\begin{gather}\label{prom_eq_5}\sqrt{w^2+4w(1+\beta)+4\big(1-7\beta+\beta^2\big)}=\frac{3(1-z) (3w-2(6-\beta) )}{(1+z)}.\end{gather}
Squaring and subtracting, we obtain extra roots corresponding to negative values of the r.h.s.\ of \eqref{prom_eq_5}, which then need to be eliminated. We obtain
\begin{gather*}
 \frac{(z-2)\big(z-\frac12\big)\big(10\big(z^4+1\big)+\big(z^3+z\big)(-77+13\beta)+2z^2\big(75-23\beta+2\beta^2\big)\big)}{z^2}=0,
\end{gather*}
or, dividing by nonzero term $\big(z-\frac12\big)(z-2)$,
\begin{gather*}
10\big(w^2-2\big)+w(-77+13\beta)+2\big(75-23\beta+2\beta^2\big)=0,
\end{gather*}
from where we get two values for $w$: \begin{gather*}
 w_1=\frac{5-\beta}{2},\qquad \text{or} \qquad w_2=\frac{2(13-2\beta)}{5},
\end{gather*}
each of them in turn produces two values for $z$ in view of \eqref{z,w},
\begin{gather}\label{z12pm}
 z_{1,\pm}=\frac{5-\beta\pm\sqrt{(1-\beta)(9-\beta)}}{4},\qquad z_{2,\pm}=\frac{13-2\beta\pm2\sqrt{(9-\beta)(4-\beta)}}{5}.
\end{gather}
Substituting $w_{1,2}$ into the r.h.s.\ of \eqref{prom_eq_5}, we see that
\begin{gather*}3w_1-2(6-\beta)<0,\qquad \text{and}\qquad 3w_2-2(6-\beta)>0,\end{gather*}
hence for $w_1$ we need to choose $z_{1,+}>1$, while for $w_2$ we need to choose $z_{2,-}\in(0,1)$. However, $z_{2,-}<\frac12$, as can be checked straightforwardly, and we have proven that $z>\frac12$ for $(x,\alpha)\in M$.
Hence, we are left with the only root $z_{1,+}$~\eqref{z12pm}, and by \eqref{eq_P} we find after some computations that
\begin{gather*}p=\frac{-3}{4}\big(\beta^2+18\beta-27\big),\qquad p^2+36\beta^2(2w-5)=\frac{9}{16}(9-\beta)^3(1-\beta),\end{gather*}
\begin{gather*}x^2=\(\frac{\beta+3+\sqrt{(1-\beta)(9-\beta)}}{4\sqrt{\beta}}\)^2,\end{gather*}
and \begin{gather*}\alpha=z x^3=\frac{27-18\beta-\beta^2+(9-\beta)\sqrt{(1-\beta)(9-\beta)}}{8\beta\sqrt{\beta}}.\end{gather*}
We don't need to make an extra check that $x_0>1$, $\alpha_0>1$ (though it is straightforward), since this follows from the previous considerations. This finishes the proof of the lemma.
\end{proof}

\appendix
\section{Appendix}
\begin{teor}[Kotlyarov--Moskovchenko \cite{MoK10}, full formulation]\label{Theor2_full} In the region $\omega_0^2 t<x<\omega^2 t$ the solution of the IBV problem \eqref{SRS}--\eqref{bc} for $t\to\infty$ takes the form of a modulated elliptic wave
\begin{gather*}q(x,t)=2\i\frac{\Theta_{12}(t,\xi;\infty)}{\Theta_{11}(t,\xi;\infty)}\exp\big[2\i t \widehat g_{\infty}(\xi)-2\i\widehat\phi(\xi)\big]+\mathcal{O}\big(t^{-1/2}\big),\\
\nu(x,t)=-1+2\frac{\Theta_{11}(t,\xi;0)\Theta_{22}(t,\xi;0)}{\Theta_{11}(t,\xi;\infty)\Theta_{22}(t,\xi;\infty)}+\mathcal{O}\big(t^{-1/2}\big),\\
\mu(x,t)=2\i\frac{\Theta_{11}(t,\xi;0)\Theta_{12}(t,\xi;0)}{\Theta_{11}^2(t,\xi;\infty)}\exp\big[2\i t \widehat g(\xi)-2\i\widehat\phi(\xi)\big]+\mathcal{O}\big(t^{-1/2}\big),\end{gather*}
where
\begin{gather*}\widehat g_{\infty}(\xi)=\lim\limits_{k\to\infty}\big(\widehat g(k,\xi)-\theta(k,\xi)\big)
=\frac12\(\int_{E}^{\infty}+\int_{\ol{E}}^{\infty}\)\left[\d\widehat g(s,\xi)-\frac{1}{4\xi^2}\right]\d k-\frac{l}{8\omega\xi^2},
\end{gather*}
and $\widehat g$ is defined in \eqref{widehat_g}, \eqref{lambda_-+d_system}. The phase shift $\widehat\phi(\xi)$ is defined by
\begin{gather*}\widehat\phi(\xi)=\frac{1}{2\pi}\int_{\gamma_{d}\cup\ol{\gamma_{d}}}(k-e_1-\zeta_{\infty})\log\big[h(k)\delta^{-2}(k,\xi)\big]\frac{\d k}{w_+(k,\xi)},\end{gather*}
where $e_1=\Re (E+d(\xi))$, and $\zeta_{\infty}(\xi)$ are uniquely determined from the condition that
\begin{gather*}\int_{E}^{k}\frac{z^2-e_1(\xi) z+e_0(\xi)}{w(z)}\d z=k+\zeta_{\infty}(\xi)+\mathcal{O}\big(k^{-1}\big),\qquad k\to\infty,\end{gather*}
and \begin{gather*}\delta(k,\xi)=\exp\left\{\frac{1}{2\pi\i}\int_{\lambda_-(\xi)}^{\lambda_+(\xi)}\frac{\log\big(1-\rho^2(s)\big)\d s}{s-k}\right\},\qquad k\in\mathbb{C}\setminus[\lambda_-(\xi), \lambda_+(\xi)],\\
h(k)=\begin{cases}-\i f(k),& k\in\gamma_{d},\\ \i f^{-1}(k), & k\in\ol{\gamma}_d,\end{cases}\end{gather*}
where $f$, $\rho$ are defined in \eqref{rho_def}, \eqref{f_def}. Further,
\begin{gather*}
\Theta_{11}(t,\xi;k)=\frac12\(\widetilde\varkappa(k)+\frac{1}{\widetilde\varkappa(k)}\)
\frac{\theta_3\left[U(k)-U(E_0)-\frac{\tau}{2}-\frac{t B_g+B_{\zeta}\Delta}{2\pi}\right]}{\theta_3\left[U(k)-U(E_0)-\frac12-\frac{\tau}{2}\right]},\\
\Theta_{12}(t,\xi;k)=\frac12\(\widetilde\varkappa(k)-\frac{1}{\widetilde\varkappa(k)}\)
\frac{\theta_3\left[U(k)+U(E_0)+\frac{\tau}{2}+\frac{t B_g+B_{\zeta}\Delta}{2\pi}\right]}{\theta_3\left[U(k)+U(E_0)+\frac12+\frac{\tau}{2}\right]},\\
\Theta_{21}(t,\xi;k)=\frac12\(\widetilde\varkappa(k)-\frac{1}{\widetilde\varkappa(k)}\)
\frac{\theta_3\left[U(k)+U(E_0)+\frac{\tau}{2}-\frac{t B_g+B_{\zeta}\Delta}{2\pi}\right]}{\theta_3\left[U(k)+U(E_0)+\frac12+\frac{\tau}{2}\right]},\\
\Theta_{22}(t,\xi;k)=\frac12\(\widetilde\varkappa(k)+\frac{1}{\widetilde\varkappa(k)}\)
\frac{\theta_3\left[U(k)-U(E_0)-\frac{\tau}{2}+\frac{t B_g+B_{\zeta}\Delta}{2\pi}\right]}{\theta_3\left[U(k)-U(E_0)-\frac12-\frac{\tau}{2}\right]},
\end{gather*}
where
\begin{gather*}U(k)=\(2\int^{d}_{\ol{d}}\frac{\d z}{w(z)}\)^{-1}\int_{E}^{k}\frac{\d z}{w(z)},\qquad \tau=\(\int^{d}_{\ol{d}}\frac{\d z}{w(z)}\)^{-1}\int_{E}^{d}\frac{\d z}{w(z)},\\
E_0=\frac{E_1 d_2+E_2 d_1}{E_2+d_2}, \qquad \widetilde\varkappa(k)=\(\frac{(k-\ol{E})(k-\ol{d(\xi)})}{(k-E)(k-d(\xi))}\)^{1/4},\\
\theta_3(z)=\sum\limits_{m\in\mathbb{Z}}\e^{\pi\i\tau m^2+2\pi\i m z},\qquad B_{g}=\frac12\left[\int_{E}^{d}+\int_{\ol E}^{\ol d}\right]\d \widehat g(z,\xi)\in\mathbb{R},\\
\Delta=\frac{1}{2\pi}\int_{\gamma_d+\ol\gamma_d}\frac{\log\big[h(s)\delta^{-2}(s,\xi)\big]}{w_+(s,\xi)}\in\mathbb{R}, \qquad B_{\zeta}=2\int_{E}^{d}\frac{\big(z^2-e_1 z+e_0\big)\d z}{w(z)}\in\mathbb{R}.\end{gather*}
\end{teor}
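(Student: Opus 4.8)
The plan is to establish these asymptotics by running the Deift--Zhou nonlinear steepest descent method on the Riemann--Hilbert Problem~\ref{RHP1}, using the phase function $\widehat g(k,\xi)$ of~\eqref{widehat_g} whose parameters $\lambda_{\pm}(\xi)$ and $d(\xi)$ now exist and are uniquely determined by Theorem~\ref{Theor_main}. The crucial point is that the solvability just proved is exactly what guarantees that $\widehat g$ possesses the sign distribution of $\Im\widehat g$ shown in Fig.~\ref{Fig_g_signature}; this sign chart is precisely the input that makes each subsequent contour deformation produce exponentially small errors, so the whole scheme of~\cite{MoK10} becomes rigorous once Theorem~\ref{Theor_main} is available.

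First I would perform the $g$-function substitution
\[
M^{(1)}(x,t;k)=\e^{-\i t\widehat g_{\infty}(\xi)\sigma_3}\,M(x,t;k)\,\e^{\i t(\widehat g(k,\xi)-\theta(k,\xi))\sigma_3},
\]
which is analytic off the contour and normalized to $I$ at $k=\infty$ by virtue of properties~(1)--(2) of $\widehat g$ (so $\widehat g-\theta\to\widehat g_{\infty}$). This replaces every oscillatory factor $\e^{\pm2\i t\theta}$ in the jump $J$ by $\e^{\pm2\i t\widehat g}$, while the constant left factor feeds a multiplicative $\e^{2\i t\widehat g_{\infty}}$ into the recovery formula $q=2\i\lim_{k\to\infty}kM_{12}$, matching the stated prefactor. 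Property~(3), $\int_E^{\ol E}\d\widehat g=0$, makes the jumps across $\gamma,\ol\gamma$ unimodular, so that $E,\ol E,d,\ol d$ become the branch points of the underlying genus-one surface. Next I would introduce the scalar function $\delta(k,\xi)$ of the appendix to remove the diagonal factor $1-\rho^2(k)$ on $[\lambda_-,\lambda_+]$ (solving an auxiliary scalar RHP), factor the jump matrices, and open lenses: the triangular factors are pushed off the line into the regions where $\Im\widehat g$ has the favorable sign, so that away from fixed neighborhoods of $E,\ol E,d,\ol d$ and of the real stationary points $\lambda_{\pm}$ all jumps tend to $I$ exponentially fast in $t$. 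The $\delta$-function and lens data accumulate into the phase shift $\e^{-2\i\widehat\phi(\xi)}$.

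Then I would build the parametrices. The global (outer) parametrix is constructed on the Riemann surface of $w(k)=\sqrt{(k-E)(k-\ol E)(k-d)(k-\ol d)}$; solving the resulting constant-jump model problem by Riemann theta functions $\theta_3$ associated to this surface yields exactly the entries $\Theta_{ij}(t,\xi;k)$ of the appendix, with the linear-in-$t$ phase $tB_g$ coming from integrating $\d\widehat g$ over the relevant cycle and the shift $B_{\zeta}\Delta$ from the $\delta$-function phase data. Near each branch point I would glue in a standard Airy (or Bessel) local parametrix, and near $\lambda_{\pm}$ a parabolic-cylinder parametrix; the latter is the sole source of the $\mathcal O(t^{-1/2})$ correction. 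The error matrix $R=M^{(\mathrm{final})}(\text{parametrix})^{-1}$ then has jumps $I+\mathcal O(t^{-1/2})$ on a fixed contour and equals $I+\mathcal O(1/k)$ at infinity, so small-norm theory gives $R=I+\mathcal O(t^{-1/2})$ uniformly for $\xi$ in compact subsets of $\big(\tfrac{1}{2\omega},\tfrac{1}{2\omega_0}\big)$. Substituting the parametrix into the recovery formulas and tracking the conjugation factors $\e^{2\i t\widehat g_{\infty}}$, $\e^{-2\i\widehat\phi}$ reproduces the stated expressions for $q$, $\mu$, $\nu$.

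I expect the main obstacle to be the global verification, throughout the complex $k$-plane, that the $\widehat g$ supplied by Theorem~\ref{Theor_main} really produces the sign distribution of Fig.~\ref{Fig_g_signature} with no spurious sign changes, equivalently that the bands $\gamma_d,\gamma_{\lambda}$ close up correctly and $\Re\widehat g(\ol E)=0$ (the argument sketched before Theorem~\ref{Theor_main} via $\Im\int_{\gamma_{\lambda}}\d\widehat g=0$). This is exactly where the existence and uniqueness of $\lambda_{\pm}(\xi),d(\xi)$ is indispensable: without it the lens deformation could fail and the theta-function parametrix would not be the correct leading term. All remaining steps are the now-standard parametrix matching and small-norm estimates.
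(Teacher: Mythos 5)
Your outline is correct and coincides with the approach the paper itself relies on: Theorem~\ref{Theor2_full} is not reproved in this paper but quoted from \cite{MoK10}, whose proof is exactly the Deift--Zhou scheme you sketch ($\widehat g$-surgery with $\widehat g_\infty$ normalization, the $\delta$-function removing $1-\rho^2$ on $[\lambda_-,\lambda_+]$, lens opening governed by the sign chart of $\Im\widehat g$, the genus-one theta-function outer parametrix producing the $\Theta_{ij}$, and local parametrices at $\lambda_\pm$ yielding the $\mathcal{O}\big(t^{-1/2}\big)$ error). The paper's sole new ingredient, Theorem~\ref{Theor_main} via Lemma~\ref{lem_problem_polynomial}, is precisely the solvability of the system \eqref{lambda_-+d_system} for $\lambda_\pm(\xi)$, $d(\xi)$ that you correctly single out as the previously unproved hypothesis which makes the \cite{MoK10} analysis rigorous.
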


\subsection*{Acknowledgements} A.M.~would like to thank Vladimir Kotlyarov for useful discussions, and Koen van den Dungen and R\'{e}my Rodiac for careful reading a version of this manuscript and giving valuable comments and suggestions. Also the authors thank the anonymous referees for careful reading of the manuscript and their comments and suggestions. Last, but not least, A.M.~is also grateful to Questura di Trieste, in a queue to which a part of this work was done.

\pdfbookmark[1]{References}{ref}
\LastPageEnding

\end{document}